\definecolor{DarkBlue}{rgb}{0.1,0.1,0.5}
\definecolor{DarkGreen}{rgb}{0.1,0.5,0.1}
\renewcommand*{\backref}[1]{}
\renewcommand*{\backrefalt}[4]{%
    \ifcase #1 (Not cited.)%
    \or        (Cited on page~#2)%
    \else      (Cited on pages~#2)%
    \fi}
\renewcommand{\paragraph}{%
  \@startsection{paragraph}{4}%
  {\z@}{1.0ex \@plus 1ex \@minus .2ex}{-1em}%
  {\normalfont\normalsize\bfseries}%
}
\newtheorem{theorem}{Theorem}
\newtheorem{lemma}{Lemma}
\newtheorem{claim}{Claim}
\newtheorem{definition}{Definition}
\theoremstyle{definition}
\DeclareMathOperator*{\argmax}{arg\,max}
\newcommand{\ALG}{\textsc{Alg}}
\newcommand{\MMS}{\mathrm{MMS}}
\newcommand{\EFone}{\mathrm{EF1}}
\newcommand{\NSW}{\mathrm{NSW}}
\newcommand{\EFX}{\mathrm{EFX}}
\newcommand{\A}{\mathcal{A}}
\newcommand{\I}{\mathcal{I}}
\newcommand{\bmu}{\overline{\mu}}
\let\oldnl\nl
\newcommand{\nonl}{\renewcommand{\nl}{\let\nl\oldnl}}%
\begin{document}
\title{\bfseries Fair Division of Indivisible Goods Among Strategic Agents}  
\author{Siddharth Barman\thanks{Indian Institute of Science. \texttt{barman@iisc.ac.in} \\ \hspace*{13pt} The first author was supported in part by a Ramanujan Fellowship (SERB - {SB/S2/RJN-128/2015}).} \quad Ganesh Ghalme\thanks{Indian Institute of Science.  \texttt{ganeshg@iisc.ac.in}} \quad Shweta Jain\thanks{Indian Institute of Science. \texttt{shwetajains20@gmail.com}} \\ \quad Pooja Kulkarni\thanks{Indian Institute of Science. \texttt{poojak@iisc.ac.in}} \quad Shivika Narang\thanks{Indian Institute of Science. \texttt{shivika@iisc.ac.in}}}
\date{}
\maketitle

\begin{abstract}
We study fair division of indivisible goods in a single-parameter environment. In particular, we develop truthful social welfare maximizing  mechanisms for \emph{fairly} allocating indivisible goods. Our fairness guarantees are in terms of solution concepts which are tailored to address allocation of indivisible goods and, hence, provide an appropriate framework for fair division of goods. This work specifically considers fairness in terms of \emph{envy freeness up to one good} ($\EFone$),  \emph{maximin share guarantee} ($\MMS$), and \emph{Nash social welfare} ($\NSW$).

Our first result shows that (in a single-parameter environment) the problem of maximizing welfare, subject to the constraint that the allocation of the indivisible goods is $\EFone$, admits a polynomial-time, $1/2$-approximate, truthful auction. We further prove that this problem is {\rm NP}-Hard and, hence, an approximation is warranted. This hardness result also complements prior works which show that an \emph{arbitrary} $\EFone$ allocation can be computed efficiently.

We also establish a bi-criteria approximation guarantee for the problem of maximizing social welfare under $\MMS$ constraints. In particular, we develop a truthful auction which efficiently finds an allocation wherein each agent gets a bundle of value at least $\left(1/2 - \varepsilon \right)$ times her maximin share and the welfare of the computed allocation is at least the optimal, here $\varepsilon >0$ is a fixed constant. We complement this result by showing that maximizing welfare is computationally hard even if one aims to only satisfy the $\MMS$ constraint approximately.
Our results for $\EFone$ and $\MMS$ are based on establishing interesting {majorization} inequalities. We also observe that the problem of maximizing $\NSW$ in a single parameter environment admits a truthful polynomial-time approximation scheme. 

\end{abstract}

\section{Introduction}

Fairness is a fundamental consideration in many real-world allocation problems. Course assignment~\cite{OSB10finding} and inventory pricing \cite{ROT11} are just two examples of such settings. These two applications, in particular, entail allocation of discrete\footnote{That is, goods which cannot be fractionally assigned to the agents.}  resources and, by contrast,  classical notions of fairness typically address divisible\footnote{Such goods represent resources, such as land, which can be fractionally assigned.} goods~\cite{S80cut,M04fair,BRA95}. While \emph{envy-free}\footnote{An allocation is said to be envy free if, under it, every agent values the bundle assigned to her at least as much as she values any other agent's bundle.} and \emph{proportional}\footnote{An allocation among $n$ agents is said to be proportional if every agent values her bundle at least $1/n$ times the valuation she has for the grand bundle of goods.} allocations always exist for divisible goods, such guarantees do not hold when the goods are indivisible: consider a single discrete good which has to be allocated between two agents. Here, in any allocation, the losing agent will be envious and will not achieve proportionality. 

This gap has been addressed in recent years by efforts that focus on fair division of indivisible goods, see, e.g.,~\cite{LAN16}. This thread of research has lead to an in-depth study of solution concepts which better capture the combinatorial aspects of allocating indivisible goods. 
 Arguably, the three most prominent notions of fairness in this line of work are \emph{envy freeness up to one good} ($\EFone$),  \emph{maximin share guarantee} ($\MMS$), and \emph{Nash social welfare} ($\NSW$). These notions have been well substantiated by the development of complementary existential results, efficient algorithms, and implementations, such as Course Match and Spliddit~\cite{PRO14,B11combinatorial,GHO17,CAR16,BAR18,LMM+04approximately,COL15,OSB10finding,GP15spliddit}. 
 
 However, almost all of this work is confined to settings where the valuations of the agents are known, i.e., to nonstrategic settings.\footnote{Notable exceptions include the work of Amanatidis et al.~\citep{AMA16}, Christoforou et al.~(\citep{CHR16}), and Nguyen et. al. ~\citep{nguyen2013allocation}. A discussion on these results is provided later in this section.} The present paper complements the literature on fair division of indivisible goods by considering this problem in a strategic setting.  

In particular, we focus on settings in which indivisible goods have to be auctioned off among strategic agents/bidders in \emph{single-parameter environments.}  In these well-studied environments, the valuation of each strategic bidder $i$ (over goods) decomposes into the agent's private valuation parameter, $v_i \in \mathbb{R}_+$, and a (global) \emph{public value summarization function} $w: 2^{[m]} \mapsto \mathbb{R}_{+}$, here $m$ is the set of goods, see, e.g.,~\cite{GOE10} and~\cite{Roth17}. Specifically, the valuation of agent $i$ for a subset of goods $S \subseteq {[m]}$ is given by $v_i(S) := v_i w(S)$, where $v_i \in \mathbb{R}_{+}$ is the private parameter of $i$ and $w(\cdot)$ is a summarization function common across all bidders.%

Single-parameter environments are of fundamental importance in mechanism design. Indeed, the foundational result of Myerson~\cite{MYE81} is applicable within this framework and these environments are used to model several important applications, e.g., (i) \emph{Ad auctions}~\cite{EDE05,VAR07} and (ii) \emph{Strategic load balancing}~\cite{ARC01}: (i) Ad auctions are run in real time---every time a user submits a keyword for web search---to assign slots (prominent positions on the webpage) to sponsored advertisements. The auction is used to determine which advertisers' links are displayed, in what order, and the payment charged to them. Here, each slot (i.e., each auctioned good) is associated with a probability with which it will receive a click from the user. These click-through rates correspond to the public value summarization function $w$ and the per-click value of bidder $i$ is modeled by the private valuation  parameter $v_i$. (ii) In the load-balancing context the weight of the loads (e.g., the size of computational jobs) gives us $w$ and the strategic agent's private parameter $v_i$ is the (processing) cost she incurs per unit load. 

Our problem formulations are broadly motivated by the fact that fairness is an important concern in many such applications of single-parameter environments. For example, in ad auctions it is relevant to consider fairness both from a quality-of-service standpoint and for regulatory reasons.\footnote{Imposing fairness has been found to alleviate publicity starvation in ad auctions~\cite{CHR16}. Furthermore, incorporating fairness has the potential of ensuring diversity in ads and, hence, improving end user experience. In addition, fairness guarantees in ad auctions provide a formal way to address mandates (e.g., the European Union competition law) that forbid search engines from implementing monopolistic policies, such as advantageously displaying their own products ~\cite{IND18}. }

In this work we express $\EFone$, $\MMS$, and $\NSW$ in terms of public value summarization function $w$ to obtain distinct formulations for the problem of fairly allocating indivisible goods. These formulations, by construction, provide fairness guarantees which can be independently validated. That is, a bidder can verify that an allocation resulting from the auction is fair even if she is not privy to the payments charged to (and the valuation parameters of) the other bidders. Note that, in this setting, since the valuation of an agent is obtained by scaling the summarization function by an agent's private valuation parameter, the fairness requirements can be equivalently stated in terms of valuations of the agents.

Besides fairness, our objectives conform to the quintessential desiderata of algorithmic mechanism design: we aim to develop computationally efficient, truthful auctions for maximizing social welfare. Our focus on social welfare, in particular, stems from the fact that it is a standard benchmark in auctions; it is often thought that a seller (in competitive settings) should consider maximizing welfare, since otherwise a competitor can potentially steal the customers by doing so.\footnote{Social welfare is an objective of choice in real-world settings as well, e.g., Facebook's online advertising system is based on the VCG mechanism (that maximizes social welfare)~\cite{metzcade2015}.}

\subsection{Additional Related Work}
\label{ssec:Additional}
In contrast to the case of indivisible goods, fair division of \emph{divisible} goods among strategic agents is relatively well-studied; see, e.g., ~\cite{COL13M, CHE13, BEI17, MAY12, branzei2017nash}. In particular, Cole et al.~\cite{COL13M} consider proportional fairness and develop a mechanism without money to allocate divisible items. 

Mechanisms without money have also been developed for indivisible goods by Amanatidis et al.~\cite{AMA16} and Christoforou et al.~\cite{CHR16}. At a high level, the goal of these results is to truthfully elicit the valuations of the agents and, thereby, achieve (approximate) fairness. Unlike the setup considered in ~\cite{AMA16} and~\cite{CHR16}, our focus is on auctions, where payments (money) provide a natural means to achieve truthfulness. Furthermore, our underlying objective is to maximize social welfare, with fairness as a constraint. ~\cite{nguyen2013allocation} also consider fair division of indivisible goods, but establish envy freeness in expectation.




Fairness in the context of auctions has been studied in prior work as well; see, e.g.,~\cite{COH11, GOL03, HAR11, FEL15, GUR05,  TAN15}. These results focus on \emph{envy-free pricing}, i.e., they establish fairness in terms of utilities, not valuations. Furthermore, in this line of work, envy-free pricing is essentially considered as a surrogate for truthfulness. In particular, Goldberg and Hartline \cite{GOL03} prove that no truthful auction that achieves a constant fraction of the optimal revenue is envy free (with respect to the utilities). Cohen et al.~\cite{COH11} show that envy free pricing and truthfulness can be achieved together only if the agents have a homogeneous capacity on the number of goods they receive.  Tang and Zhang~\cite{TAN15} consider envy-free pricing, in lieu of truthfulness, and develop a polynomial-time approximation scheme for maximizing social welfare in sponsored search. As mentioned previously, all the auctions developed in this work are truthful and we study fairness in terms of valuations.\footnote{This leads to publicly-verifiable fairness guarantees.}

\noindent
\subsection{\bf Our Contributions and Techniques}
As mentioned previously, three central solution concepts in the context of fair division of indivisible goods are $\EFone$, $\MMS$ and $\NSW$. Next, we detail our fair auctions (FA) for these three notions. 

\noindent
(i) \textbf{FA-EF1}:  Envy freeness up to one good ($\EFone$) was defined by Budish~\cite{B11combinatorial}. This comparative notion of fairness provides a cogent analogue of envy freeness for the indivisible case. An allocation is said to be $\EFone$ iff, under it, every agent values her bundle at least as much as any other agent's bundle, up to the removal of the most valuable good from the other agent's bundle. Interestingly, an $\EFone$ allocation always exists\footnote{For indivisible goods, such a universal existence guarantee does not hold with respect to envy freeness.} and can be computed efficiently, even under general, combinatorial valuations~\cite{LMM+04approximately}. Another attractive aspect of this fairness notion is that it does not compromise economic efficiency: under additive valuations, there always exists an allocation which is both $\EFone$ and \emph{Pareto efficient}~\cite{CAR16}.\footnote{In the current setup, since the valuations are scalings of the additive public value summarization function, the allocations determined by our mechanisms will always be Pareto efficient.}

We consider the standard objective of maximizing social welfare subject to the $\EFone$ constraints.
This problem is {\rm NP}-hard (Appendix~\ref{sec:EF1Supp}). Complementing this hardness result, we establish an approximation guarantee for the corresponding mechanism design problem. Specifically, we show that the problem of maximizing social welfare, under $\EFone$ constraints, admits a $1/2$-approximate, truthful auction  (Theorem~\ref{thm:efone}). We also prove that in general single-parameter environments\footnote{In general single-parameter environments the public value summarization functions can be agent specific.} (therefore, in multi-parameter environments) a nontrivial approximation guarantee cannot be achieved, under standard complexity-theoretic assumptions (Appendix~\ref{sec:EF1Supp}). This strong negative result is in sharp contrast to the constant-factor approximation guarantee which we obtain for single-parameter environments with identical public valuations.

\noindent 
(ii) \textbf{FA-MMS}: Maximin share guarantee ($\MMS$) is a threshold-based notion defined by Budish \cite{B11combinatorial}. This notion deems an allocation to be fair iff every agent gets a bundle of value at least as much as an agent-specific fairness threshold called the {maximin share}. These shares correspond to the maximum value that an agent can guarantee for herself if she were to (hypothetically) partition the goods into $n$ subsets and, then, from them receive the minimum valued one; here $n$ is the total number of agents. In other words, the maximin share is the value obtained by an execution of the \emph{cut-and-choose} protocol over indivisible goods: an agent forms an $n$-partition of the goods and the remaining $(n-1)$ agents select a subset from the partition before the agent. Hence, a risk-averse agent will form a partition which maximizes the minimum value over the subsets in it.
Our goal is to develop a truthful social-welfare maximizing auction subject to the constraint that each agent receives a bundle of value at least her maximin share. 

As computing the maximin share is {\rm NP}-hard,\footnote{A reduction from the partition problem proves this claim, even for additive and identical valuations.} this paper considers a bi-criteria approximation guarantee. We show that under standard complexity-theoretic assumptions a bi-criteria approximation is unavoidable (Appendix~\ref{sec:AMMSSupp}). 
We develop a truthful auction which efficiently computes an allocation where each agent gets a bundle of value at least $(1/2 -\varepsilon)$ times her maximin share and the social welfare of the computed allocation is at least as much as the optimal (Theorem~\ref{thm:mms}); here $\varepsilon > 0$ is a fixed constant.\footnote{Note that under additive (but nonidentical) valuations an $\MMS$ allocation might not exist \cite{PRO14,KPW16can}.}


\noindent
(iii) \textbf{FA-NSW}:  The Nash social welfare ($\NSW$) of an allocation is defined as the geometric mean of the agents' valuations for their bundles~\cite{N50bargaining,KN79nash}. $\NSW$ provides a measure to quantify the extent of fairness of an allocation~\cite{M04fair,KEL97}. Specifically, for divisible goods, it is known to satisfy strong fairness and efficiency properties~\cite{V74equity}. Even in the indivisible case, if the valuations are additive, then an allocation which maximizes $\NSW$ is both fair ($\EFone$) and Pareto efficient~\cite{CAR16}. 

Though, finding an allocation (of indivisible goods) which maximizes $\NSW$ under additive valuations is {\rm APX}-hard~\cite{Lee17APX}, a number of constant-factor approximation algorithms have been developed for this problem under additive valuations:~\cite{COL15} established a $2.89$ approximation for maximizing $\NSW$, and this approximation ratio has been improved to $e$~\cite{AGS+17nash}, $2$~\cite{COL17} and, most recently, to $1.45$~\cite{BAR17}. In fact, if the valuations of the nonstrategic agents for the indivisible goods are identical and additive, then a polynomial-time approximation scheme (PTAS) is known for maximizing $\NSW$~\cite{nguyen2014minimizing}.

We consider the problem of maximizing Nash social welfare in single parameter environment; see \textsc{FA-NSW} in Section~\ref{sec:prelims}. Specifically, we observe that, for this problem, the approximation result from the nonstrategic setting directly leads to a truthful PTAS (Theorem~\ref{thm:nsw}).  \\

In general, EF1, MMS, and approximate NSW are incomparable notions. ~\cite{CAR16} show that MMS does not imply EF1, and vice versa. Also, simple examples establish that an approximately Nash optimal allocation is not guaranteed to be EF1 or MMS. 

 Our approximation results for all the three notions are based on finding partitions of the goods  considering only their public value $w$. That is, we compute partitions without considering the agents' private valuation parameters and hence the computation is \emph{bid-oblivious}. We show that as long as the $j$th highest (with respect to the public value) subset in the computed partition is allocated to the $j$th highest bidder, the stated approximation guarantees are achieved. In addition, such a sorted allocation leads to a \emph{monotone allocation rule} and provides a truthful auction.

Our algorithms for the $\EFone$ and $\MMS$ formulations are completely combinatorial and can be implemented in sorting time. Given that in many applications the underlying auction has to be executed in real time and at scale, the simplicity and extreme efficiency of the resulting mechanisms are notable merits. The approximation results for the $\EFone$ and $\MMS$ formulations rely on proving \emph{majorization inequalities}; see Definition~\ref{def:major}. In particular, for $\EFone$ we show that all $\EFone$ partitions approximately majorize each other (Lemma~\ref{lemma:major}). This property is interesting in its own right and shows that---independent of the valuation parameters/bids per se---as long as we allocate the partition in a sorted manner the  approximation guarantee holds. 
For the $\MMS$ problem, we design an efficient algorithm which finds a  $\left(1/2 - \varepsilon \right)$-approximate $\MMS$ allocation which majorizes an optimal allocation. 

For the $\NSW$ problem, one can use the fact that the valuations are identical, up to scaling. Specifically, an allocation which (approximately) maximizes $\NSW$ with respect to the public valuation $w$ also (approximately) maximizes $\NSW$ with respect to the valuations. Therefore, using the PTAS of~\cite{nguyen2014minimizing} for maximizing $\NSW$ under additive, identical valuations, we obtain the desired truthful auction. Details of this $\NSW$ result are deferred to a full version of this work and in the remainder of the paper we focus on fairness in terms of $\EFone$ and $\MMS$.

\section{ Preliminaries }
\label{sec:prelims}
We denote an instance of the fair-auction setting $\I$ with $n$ bidders, $[n]=\{1,2,\ldots,n\}$, and $m$ indivisible goods, $[m]=\{1,2, \ldots, m\}$ by a tuple $\langle [m], [n], w, (v_i)_{i\in [n]} \rangle $. The private preference of each agent $i$ is represented by a single parameter $v_i \in \mathbb{R}_+$. In addition, the weight/quality of a subset of goods $S \subseteq [m]$, is specified through a publicly-known summarization function, $w: 2^{[m]} \mapsto \mathbb{R}_+$. The valuation of bidder $i \in [n]$ for a subset of goods $S$ is defined to be $v_i \ w(S)$. Throughout, we will consider $w$ to be additive, i.e., $w(S) := \sum_{g \in S} w(g)$, where $w(g)$ denotes the weight/quality of good $g \in [m]$. For ease of presentation, we overload notation and use $v_i \in \mathbb{R}_+$ to denote the valuation parameter and $v_i( \cdot)$ to denote the valuation function of bidder $i$, $v_i(S) = v_i w(S)$. Furthermore, in this single-parameter environment, since $v_i$s directly scale $w$, the fairness guarantees can be equivalently stated in terms of agents' valuations.

\subsection{Fairness Notions } Write $\Pi_n([m])$ to denote the set of $n$-partitions of the set $[m]$. An  \emph{allocation}  $\A $ $=(A_1, A_2, \ldots, A_n)$ $ \in \Pi_n([m])$ refers to an $n$-partition of $[m]$ in which subset  $A_i$ is assigned to agent $i$. The fairness notions considered in this work are: $\EFone$ and $\MMS$. 

\noindent
(i) An allocation $\A $ is said to be $\EFone$ if for every pair of agents $i, j\in [n]$ there exists a good $g \in A_j$ such that  $v_i(A_i) \geq v_i(A_j) - v_i(g)$. Since $v_i \in \mathbb{R}_+$ is a positive scaling term, the inequality can be rewritten as $w(A_i) \geq w(A_j \setminus g)$ .

\noindent
(ii) Given a fair division instance $\I= \langle [m], [n], w, (v_i)_i \rangle$ the maximin share, $\mu$, is defined as

$$\mu :=  \max_{(P_1,\ldots, P_n) \in \Pi_n([m])} \ \min_{j \in [n]} w(P_j).$$ An allocation $ (A_1, \ldots, A_n)$ is said to be $\MMS$ iff $w(A_i) \geq \mu$ for all agents $i \in [n]$. We can define the maximin share of an agent $i$ as $\MMS_i := \max_{(P_1,\ldots, P_n) \in \Pi_n([m])} \ \min_{j \in [n]} \ v_i (P_j) $. Note that $\MMS_i = v_i \mu$ and we get that an allocation is $\MMS$ iff each agent $i$ receives a bundle of value at least $v_i \mu$. We will also consider allocations which satisfy the $\MMS$ requirement approximately: for $\alpha \in (0,1]$, an allocation $(A_1,\ldots, A_n)$, which satisfies $w(A_i) \geq \alpha \mu$ for all $i \in [n]$, is said to be $\alpha$-approximate $\MMS$.  

\noindent
(iii) The Nash social welfare of an allocation $\A=(A_1, \ldots, A_n) $ is defined as the geometric mean of the agents' valuations, $\left( \prod \limits_{i=1}^n \left(v_i w(A_i) \right) \right)^{1/n}  = \left( \prod \limits_{i=1}^n v_i  \right)^{1/n}  \ \left( \prod \limits_{i=1}^n w(A_i)  \right)^{1/n} $. Note that, for fixed positive scalars  $v_i$s, an allocation that maximizes $\NSW(\A) : = \left( \prod_{i=1}^n w(A_i)  \right)^{1/n}$ also maximizes the Nash social welfare with respect to valuations. 

With these solution concepts in hand, this work considers the standard objective of maximizing welfare. Formally, given instance $\I= \langle [m], [n], w, (v_i)_i \rangle$, the respective optimization problems addressed by the auctioneer are 

{
\begin{figure*}[ht!]
\centering
\fbox{\begin{minipage}[2cm][3.5cm][t]{0.2 \columnwidth}
\centering 
\underline{\textsc{FA-EF1}}
\begin{align*}
\max_{\substack{(S_1,\ldots, S_n)  \\ \in \Pi_n([m])} } & \ \ \ \sum \limits_{i=1}^{n} v_i w(S_i) \\
\text{s.t. }   w(S_i) & \geq w (S_j) - w(g) \\     \text{ for all }  i,j \in & [n]  \text{ some } g \in S_j    
 \end{align*}
\end{minipage} }
\hfill 
\fbox{\begin{minipage}[2 cm][3.5cm][t]{0.2 \columnwidth}

\centering 
\underline{\textsc{FA-MMS}}
\begin{align*}
\max_{\substack{(S_1,\ldots, S_n) \\ \in \Pi_n([m])}} & \ \ \ \sum \limits_{i=1}^{n} v_i w(S_i) \\
\text{s.t. }   w(S_i) & \geq \mu \ \ \ \text{for all } i \in [n]  \\ 
  \ \  ( \mu   \text{ is $\MMS$}& \text{ value under $w$)} 
 \end{align*} 
 \normalsize
\end{minipage}}
\hfill 
\fbox{\begin{minipage}[2cm][3.5cm][t]{0.2 \columnwidth}
\centering 
\underline{\textsc{FA-NSW}}
\begin{align*}
\small
\max_{\substack{(S_1,\ldots, S_n)  \\ \in \Pi_n([m])}}  \ \  \left( \prod_{i=1}^n v_i  \right)^{\frac{1}{n}}  \ \left( \prod_{i=1}^n w(S_i)  \right)^{\frac{1}{n}}   
 \end{align*}
 \normalsize
\end{minipage} }
\label{fig:fig1}
\end{figure*}
}

\subsection{Mechanism Design Terminology} This work develops a single-round, sealed-bid auction for the above mentioned single-parameter environment. Here, the true per-unit valuation, $v_i$, is the only strategic parameter of each agent $i$, which she reports as a bid $b_i \in \mathbb{R}_+$. Hence, the auctioneer has to design a mechanism for the fair-auction instance  $\I= \langle [m], [n], w, (b_i)_{i\in [n]} \rangle$. In particular, the auctioneer requires an allocation rule, $\mathcal{A}$, which maps the submitted bids, $b_i$s, to an allocation, $\mathcal{A}: \mathbb{R}_+^n \mapsto \Pi_n([m])$, along with a payment rule, $p$, which charges a payment of $p_i$ to agent $i$.

In the presence of strategic agents, who could potentially misrepresent the parameter $v_i$, we would like to design allocation rule $\mathcal{A}$ coupled  with an appropriate payment rule $p$ such that truthful reporting is a dominant strategy. That is, the desiderata is to develop a mechanism, $(\mathcal{A}, p)$, which is \emph{dominant strategy incentive compatible (DSIC)} in addition to being fair, welfare maximizing, and computationally efficient. Recall that DSIC provides strong incentive guarantees and it imposes nominal behavioral assumptions. This property requires that for each agent, truthfully reporting her actual valuation is a dominant strategy and it never leads to negative utility. 


For allocation rule $\mathcal{A}(b_1, \ldots, b_n) \in \Pi_n([m])$, write $A_i(b_1, \ldots, b_n)$ to denote the bundle allocated to agent $i$. We consider the standard  quasilinear-utility model wherein utility of agent $i$ under allocation rule $\mathcal{A}$ is given by $u_i(\mathcal{A}(b_i, b_{-i},{w}); v_i):= v_i \ w(A_i (b_i, b_{-i})) - p_i$, here $b_i$ is a bid of agent $i$ and $b_{-i} \in \mathbb{R}_{+}^{n-1}$ are the bids of all the other agents. In the current context, a mechanism $(\mathcal{A} , p ) $ is DSIC iff  $ u_i(\mathcal{A}(v_i, b_{-i}); v_i )$ $ \geq u_i(\mathcal{S}(b_i, b_{-i}); v_i) $ for all bids $ b_i \in \mathbb{R}_{+}$ and reports $b_{-i} \in \mathbb{R}_{+}^{n-1}$. \\

\section{Main Results}
\label{sec:results}
 In this work we establish the following key results 

\begin{restatable}{theorem}{EFONE} 
\label{thm:efone}
\label{THM:EFONE}
There exists a polytime, DSIC mechanism that achieves an approximation ratio of $1/2$ for \textsc{FA-EF1}.
\end{restatable}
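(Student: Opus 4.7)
The plan is to exhibit a \emph{bid-oblivious} mechanism: compute an EF1 partition from the public weights alone, then assign its parts in sorted order to the sorted-by-bid agents, and charge Myerson payments.

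\textbf{Mechanism.} Compute any EF1 partition $(P_1,\dots,P_n)$ of $[m]$ with respect to the identical public valuation $w$, using, e.g., round-robin or the envy-cycle algorithm; this takes polynomial time. Reindex so that $w(P_1)\geq\cdots\geq w(P_n)$, and assign $P_i$ to the agent with the $i$th highest reported bid. Because in our environment every agent's valuation is a positive scaling of the common $w$, the EF1 condition $w(S_i)\geq w(S_j\setminus\{g\})$ is agent-independent, so any bijection between an EF1 partition and the agents is still an EF1 allocation. Thus the output is always feasible.

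\textbf{Truthfulness.} The allocation rule is monotone in each $b_i$: raising $b_i$ can only move agent $i$ up in the sorted order and hence award her a bundle of weakly larger $w$-value. Monotonicity in a single-parameter environment, together with Myerson's characterization, yields a DSIC mechanism; the payment $p_i(b)=b_i x_i(b)-\int_0^{b_i}x_i(t,b_{-i})\,\mathrm{d}t$ is a finite sum of step contributions with break-points at the other bids, and hence is computable in polynomial time.

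\textbf{Approximation.} The welfare bound rests on the majorization inequality stated as Lemma~\ref{lemma:major}: for any two EF1 partitions $P$ and $Q$ of $[m]$ indexed so that $w(P_1)\geq\cdots\geq w(P_n)$ and $w(Q_1)\geq\cdots\geq w(Q_n)$, the sequence $(2w(P_i))_i$ majorizes $(w(Q_i))_i$, i.e.\ $2\sum_{i=1}^k w(P_i)\geq \sum_{i=1}^k w(Q_i)$ for every $k\leq n$. Granting this lemma, let $(S^*_1,\dots,S^*_n)$ be an optimal EF1 allocation. Since valuations are positive scalings of a common $w$, the rearrangement inequality lets us assume without loss of generality that $w(S^*_1)\geq\cdots\geq w(S^*_n)$ with $S^*_i$ going to the $i$th highest bidder. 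Writing sorted bids $b_1\geq\cdots\geq b_n$ and setting $b_{n+1}:=0$, Abel summation yields
\[
2\sum_{i=1}^n b_i\,w(P_i)=2\sum_{k=1}^n(b_k-b_{k+1})\sum_{i=1}^k w(P_i)\geq\sum_{k=1}^n(b_k-b_{k+1})\sum_{i=1}^k w(S^*_i)=\sum_{i=1}^n b_i\,w(S^*_i)=\mathrm{OPT},
\]
since each coefficient $b_k-b_{k+1}$ is nonnegative. This proves the $1/2$-approximation guarantee.

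\textbf{Main obstacle.} The combinatorial heart of the argument is Lemma~\ref{lemma:major}: that any two EF1 partitions approximately majorize each other with factor two. The remaining pieces---feasibility, monotonicity, Myerson payment computation, and Abel summation---are then routine. The intuition for the lemma is that in any EF1 partition the largest bundle weight exceeds the smallest by at most one good's weight, bounding how top-heavy the sorted weight profile can become; converting this pointwise control into a prefix-sum inequality against an arbitrary EF1 comparator is the substantive combinatorial step.
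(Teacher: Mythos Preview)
Your proposal is correct and follows essentially the same approach as the paper: compute any EF1 partition bid-obliviously, allocate it in sorted order, invoke Myerson for DSIC, and derive the $1/2$-approximation from the majorization lemma (Lemma~\ref{lemma:major}) via Abel summation---which is exactly the paper's Lemma~\ref{lemma:betamaj} written inline. As you correctly identify, the combinatorial content lies entirely in Lemma~\ref{lemma:major}, which the paper proves by a two-case analysis on the last index where one partition's sorted weight exceeds the other's.
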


 It is relevant to note that \textsc{FA-EF1} is $\rm{NP}$-hard (Appendix~\ref{sec:EF1Supp}). We further complement the approximation guarantee of Theorem~\ref{thm:efone} by showing that it is $\rm{NP}$-hard to obtain an $m^\delta$-approximation for the analogous problem (of maximizing social welfare subject to $\EFone$ constraints) in \emph{general} single-parameter environments; here $\delta>0$ is a fixed constant (Appendix ~\ref{sec:EF1Supp}). This hardness of approximation result is obtained via an approximation-preserving reduction from the Maximum Independent Set problem (Appendix ~\ref{ssec:SSEF1GenHardness}).


In the context of \textsc{FA-MMS}, it is relevant to note that computing the maximin share, $\mu$, is $\rm{NP}$-hard. Hence, we consider a bi-criteria approximation guarantee and establish the following result in Section~\ref{sec:SSMMSProof}.

\begin{restatable}{theorem}{LMMS}
\label{thm:mms}
\label{THM:MMS}
There exists a polynomial time, DSIC mechanism which computes a $\left(1/2 - \varepsilon\right)$-approximate $\MMS$ allocation with social welfare at least as much as the optimal value of \textsc{FA-MMS}, here $\varepsilon \in (0,1)$ is a fixed constant. 
\end{restatable}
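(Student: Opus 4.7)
The plan is to decompose the mechanism into a bid-oblivious partitioning step and a rank-based assignment, following the same template as the $\EFone$ result. Concretely, I would first design an efficient algorithm, depending only on the public weight $w$, that produces a partition $P = (P_1, \dots, P_n) \in \Pi_n([m])$ with two simultaneous properties: every bin satisfies $w(P_j) \geq (1/2 - \varepsilon)\mu$, and the sorted weight vector $(w(P_{(1)}), \dots, w(P_{(n)}))$, with $w(P_{(1)}) \geq \cdots \geq w(P_{(n)})$, majorizes the sorted weight vector of some optimal $\textsc{FA-MMS}$ allocation. Since $\mu$ is $\mathrm{NP}$-hard to compute exactly, I would use a $(1-O(\varepsilon))$-approximation of $\mu$ obtained by binary searching a guess and checking it with a PTAS-style verifier for max-min scheduling on identical machines (or an LPT-based greedy procedure). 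The slack between this estimate and $\mu$, together with any combinatorial loss in the partition routine, is absorbed into the final $(1/2 - \varepsilon)$ factor.

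Given the partition, the mechanism sorts the submitted bids $b_1, \dots, b_n$ in decreasing order and assigns bin $P_{(j)}$ (the $j$-th heaviest bin) to the $j$-th highest bidder. For each bidder $i$, the weight of her allocated bundle, viewed as a function of her bid $b_i$ with $b_{-i}$ held fixed, is a monotone non-decreasing step function: raising $b_i$ can only (weakly) promote her rank and hence the bundle she receives. By Myerson's lemma this monotone allocation rule admits a payment rule that makes the mechanism DSIC and individually rational, and both the allocation and the payments can be computed in polynomial time.

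For the welfare guarantee, let $\mathcal{A}^\star = (A^\star_1, \dots, A^\star_n)$ be an optimal solution to $\textsc{FA-MMS}$. WLOG the bids satisfy $v_1 \geq v_2 \geq \cdots \geq v_n$, and by the rearrangement inequality applied to the assignment step (which preserves $\MMS$-feasibility as the constraint is symmetric across agents) we may also assume $w(A^\star_1) \geq \cdots \geq w(A^\star_n) \geq \mu$. The majorization guarantee from the partitioning step then gives
\[
\sum_{i=1}^{k} w(P_{(i)}) \;\geq\; \sum_{i=1}^{k} w(A^\star_i) \qquad \text{for every } k \in [n],
\]
with equality at $k = n$ since both sides equal $w([m])$. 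Abel summation against the non-increasing sequence $(v_i)$ yields $\sum_{i=1}^n v_i\, w(P_{(i)}) \geq \sum_{i=1}^n v_i\, w(A^\star_i)$, which is precisely the claim that the mechanism's welfare meets the $\textsc{FA-MMS}$ optimum. The $(1/2 - \varepsilon)$-approximate $\MMS$ property follows immediately from the per-bin lower bound $w(P_j) \geq (1/2 - \varepsilon)\mu$.

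The principal obstacle is the majorization inequality itself, since the approximate-$\MMS$ property alone only imposes lower bounds on each bin and gives no direct control over partial sums of the largest bins. I expect the key technical step to be an exchange argument specialized to the output of the partitioning routine: starting from any $\textsc{FA-MMS}$ feasible allocation, one shows that items can be iteratively moved from small bins into larger bins in a way that preserves feasibility of the approximate-$\MMS$ constraint, so that every partial top-$k$ sum of $P$ dominates the corresponding partial sum of any $\textsc{FA-MMS}$ feasible partition. This parallels the $\EFone$ majorization lemma but is more delicate because the threshold $\mu$ interacts nontrivially with the combinatorial structure of the partition and is itself only known approximately.
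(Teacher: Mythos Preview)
Your high-level template matches the paper exactly: a bid-oblivious partition, a sorted assignment, Myerson's lemma for DSIC, and a majorization inequality to transfer welfare guarantees via Abel summation. That part is fine and is precisely what the paper does.

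The genuine gap is that you never specify the partitioning algorithm, and your proposed route to majorization --- an exchange argument that iteratively moves items between bins of some feasible allocation --- is not how the paper proceeds and is not obviously workable as stated. The paper's argument is structural rather than exchange-based. Its algorithm is designed so that every bin \emph{except one distinguished bin} $P_n$ has weight at most $\mu$: medium-valued goods ($\bar\mu/2 \le w(g) < \bar\mu$) become singletons, tiny goods ($w(g) < \bar\mu/2$) are greedily packed into bins of weight in $[\bar\mu/2, \bar\mu)$, and the smallest high-valued goods ($w(g) \ge \bar\mu$) are placed as singletons, with all remaining high-valued goods dumped into $P_n$. A short structural claim about the optimum (for $i \ne 1$, $O_i$ lies entirely in $L$ or entirely in $H$) then gives the pointwise inequality $w(A_j) \le w(O_j)$ for every $j \ge 2$. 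Combined with $\sum_i w(A_i) = \sum_i w(O_i)$, this immediately yields majorization --- no iterative exchange is needed.

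In other words, the key idea you are missing is not a clever exchange lemma but a deliberately \emph{asymmetric} construction: keep all but one bin small (weight $\le \mu$) while still meeting the $\bar\mu/2$ floor, and shove the surplus into a single heavy bin. That single design choice is what makes the majorization fall out in one line. Your sketch treats the partition as a black box satisfying only the $(1/2-\varepsilon)\mu$ floor, and as you yourself note, that floor alone gives no control over top-$k$ partial sums; without the ``all-but-one bins are light'' property, an exchange argument would have to reconstruct essentially the same structure.
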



In addition, we show that, under standard complexity-theoretic assumptions, an efficient algorithm which achieves  a nontrivial approximation for \textsc{FA-MMS}, without violating the $\MMS$ constrains, does not exist; see Appendix~\ref{sec:AMMSSupp}. 

In Section~\ref{sec:SS-NSW} we observe that a truthful PTAS can be developed for the third formulation.
\begin{restatable}{theorem}{LNSW}
\label{thm:nsw}
\label{THM:NSW}
There exists a DSIC mechanism which, for every fixed $\varepsilon >0$, finds a $(1+\varepsilon)$-approximate solution of \textsc{FA-NSW} in polynomial time.
\end{restatable}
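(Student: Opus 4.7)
The plan is to reduce \textsc{FA-NSW} to maximizing Nash social welfare under a single identical additive valuation, for which the PTAS of~\cite{nguyen2014minimizing} directly applies. The decisive observation, already made in Section~\ref{sec:prelims}, is that the objective of \textsc{FA-NSW} factorizes as $\left(\prod_{i=1}^n v_i\right)^{1/n}\left(\prod_{i=1}^n w(S_i)\right)^{1/n}$: the first factor depends on the bids but not on the partition, while the second factor depends only on the multiset of bundle weights and in particular is invariant under permuting which bidder gets which bundle. Hence any partition that is a $(1+\varepsilon)$-approximation for $\left(\prod_{i=1}^n w(S_i)\right)^{1/n}$ over $\Pi_n([m])$ is automatically a $(1+\varepsilon)$-approximation for \textsc{FA-NSW}, regardless of the eventual bundle-to-bidder assignment.

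Concretely, I would run the Nguyen--Rothe PTAS on the $n$-agent instance in which all bidders share the common additive valuation $w$. For any fixed $\varepsilon > 0$, this returns in polynomial time a partition $(S_1,\ldots,S_n) \in \Pi_n([m])$ that is a $(1+\varepsilon)$-approximation for the $w$-only NSW, and by the displayed factorization the same guarantee transfers to \textsc{FA-NSW}. To obtain a truthful mechanism, I would then assign the computed bundles in a sorted manner, exactly as in \textsc{FA-EF1} and \textsc{FA-MMS}: relabel so that $b_1 \geq \cdots \geq b_n$ and $w(S_1) \geq \cdots \geq w(S_n)$, and allocate $S_i$ to bidder $i$. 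Because the partition was produced bid-obliviously from $w$, this sorted re-labelling leaves $\prod_i v_i\,w(A_i)$ unchanged, so the $(1+\varepsilon)$-approximation survives.

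It remains to verify DSIC via Myerson's lemma. Since the partition is bid-oblivious, the $w$-value received by a bidder is determined entirely by her rank among the reported bids, and is therefore (weakly) nondecreasing in her own bid when the others are held fixed. This is precisely the monotone-allocation condition in single-parameter environments, so pairing the allocation rule with the unique Myerson threshold payments yields a DSIC mechanism. The overall running time is polynomial for every fixed $\varepsilon$ (the PTAS dominates; sorting and the rank-threshold payment computation are cheap). The main obstacle, such as it is, is conceptual rather than technical: one must notice that the \textsc{FA-NSW} objective decouples across the private parameters and the public $w$-values, so that a bid-oblivious partition together with a sorted allocation simultaneously delivers the approximation guarantee and the monotonicity needed for truthfulness. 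All genuinely combinatorial work is outsourced to the existing PTAS of~\cite{nguyen2014minimizing}.
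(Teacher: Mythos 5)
Your proposal is correct and follows essentially the same route as the paper: both exploit the fact that the \textsc{FA-NSW} objective factorizes so that only the $w$-dependent product matters, invoke the PTAS of Nguyen and Rothe for identical additive valuations to obtain a bid-oblivious partition, and then use a sorted allocation to get monotonicity and hence DSIC via Myerson's lemma. No substantive differences from the paper's argument.
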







\section{ Mechanism Design to Algorithm Design } 
\label{sec:mdtoad}
An auction  $(\mathcal{A},p)$ is given by an allocation rule $\mathcal{A}: \mathbb{R}^n_+ \mapsto \Pi_n([m])$ which maps the bids, $(b_i)_{i\in [n]}$, to a partition of goods, and a payment rule, $p$, which specifies the payment $p_i$ charged to agent $i \in [n]$. We rely on the foundational result of Myerson (Lemma~\ref{lemma:myerson}) which asserts that for DSIC mechanisms it suffices to construct \emph{monotone} allocation rules. 

\begin{definition}[Monotone Allocation Rule] 
An allocation rule $\mathcal{A}: \mathbb{R}^n_+ \mapsto \Pi_n([m])$ is said to be monotone if for all agents $i \in [n]$ and bids of other agents $b_{-i} \in \mathbb{R}_+^{n-1}$, the allocation to agent $i$ is non-decreasing, i.e., $w(A_i(z, b_{-i}))$ is non-decreasing function of bid $z \in \mathbb{R}_+$. Here, $w$ is the public value summarization function and $A_i(z, b_{-i})$ is the bundle allocated to $i$ under the bid profile $(z, b_{-i})$.
\end{definition}

\begin{lemma}[Myerson's Lemma~\cite{MYE81}]
\label{lemma:myerson}
Under a single parameter, quasilinear utility model, an allocation rule, $\mathcal{A}$, can be coupled with a payment rule, $p$, to obtain a DSIC mechanism, $(\mathcal{A},p)$, if and only if $\mathcal{A}$ is monotone. Further, the payment rule, $p$, which renders such an allocation rule truthful (DSIC) is unique and is given by 
$$p_i(b_i, b_{-i}) := b_i w(A_i(b_i, b_{-i})) - \int \limits_{0}^{b_i} w(A_i(z, b_{-i})) dz $$ . 
\end{lemma}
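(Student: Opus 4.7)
The plan is to prove Myerson's lemma by the standard two-direction biconditional argument together with a sandwich argument for uniqueness of the payment rule. First I would fix an agent $i\in[n]$ and a profile $b_{-i}\in\mathbb{R}_+^{n-1}$ of the other agents' bids, and introduce the one-dimensional shorthand $x(z):= w(A_i(z,b_{-i}))$ and $P(z):= p_i(z,b_{-i})$ for, respectively, the allocation curve and the payment curve as functions of agent $i$'s own bid $z\in\mathbb{R}_+$. In this notation the DSIC requirement reduces to the clean statement: for every pair of types $y,z\in\mathbb{R}_+$, the agent with true value $z$ weakly prefers reporting $z$ to reporting $y$, i.e.\ $z\,x(z)-P(z)\ge z\,x(y)-P(y)$.

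For the necessity direction I would write the two IC inequalities corresponding to ``true $y$, report $z$'' and ``true $z$, report $y$,'' add them to cancel the payment terms, and obtain $(z-y)(x(z)-x(y))\ge 0$, which is exactly monotonicity of the allocation rule in agent $i$'s bid. For the sufficiency direction, starting from a monotone $x$ and the payment rule $P(b):= b\,x(b)-\int_0^b x(z)\,dz$, the utility of a type-$v_i$ agent who reports $b$ simplifies to $(v_i-b)\,x(b)+\int_0^b x(z)\,dz$; subtracting this from the truthful utility $\int_0^{v_i} x(z)\,dz$ leaves $\int_b^{v_i}\bigl(x(z)-x(b)\bigr)\,dz$ when $b<v_i$ and $\int_{v_i}^{b}\bigl(x(b)-x(z)\bigr)\,dz$ when $b>v_i$, both of which are non-negative by monotonicity of $x$. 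Hence truthful reporting is a dominant strategy, and non-negativity of the truthful utility $\int_0^{v_i} x(z)\,dz$ simultaneously yields individual rationality.

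For uniqueness, applying the same pair of IC inequalities to any $y<z$ sandwiches the payment difference as $(z-y)\,x(y)\le P(z)-P(y)\le (z-y)\,x(z)$, which forces $dP = z\,dx(z)$ as a Riemann--Stieltjes differential; integration by parts then recovers $P(z)=z\,x(z)-\int_0^z x(t)\,dt + P(0)$, and the additive constant is pinned to $P(0)=0$ by the natural normalization that a zero-type agent (who values her received bundle at zero) is charged nothing. The only step requiring a little care is handling the uniqueness conclusion when $x$ has jump discontinuities, but because a monotone $x$ is of bounded variation the Stieltjes integration-by-parts goes through without issue; I expect this to be the mildest technical obstacle of the proof, the rest being direct algebra on the one-dimensional curves $x$ and $P$.
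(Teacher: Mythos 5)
First, note that the paper itself offers no proof of this lemma: it is imported verbatim from Myerson's 1981 paper and used as a black box, so there is no in-text argument to compare yours against. What you have written is the standard textbook proof --- fix $i$ and $b_{-i}$, reduce to the one-dimensional curves $x(z)=w(A_i(z,b_{-i}))$ and $P(z)=p_i(z,b_{-i})$, derive monotonicity by summing the two incentive constraints, verify sufficiency of the explicit payment formula by the integral comparison, and pin down uniqueness by a sandwich argument plus the normalization $P(0)=0$. The necessity and sufficiency directions are correct as written, and your individual-rationality remark is genuinely needed here because the paper's notion of DSIC includes non-negativity of truthful utility.

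There is, however, one concretely wrong step: the sandwich inequality you display for uniqueness, $(z-y)\,x(y)\le P(z)-P(y)\le (z-y)\,x(z)$, is false. Take a second-price auction for a single good of weight $1$ with the competing bid equal to $5$: then $x(4)=0$, $x(6)=1$, $P(6)-P(4)=5$, while your upper bound gives $(6-4)\,x(6)=2$. The two incentive constraints for a pair $y<z$ actually yield
\[
y\bigl(x(z)-x(y)\bigr)\;\le\;P(z)-P(y)\;\le\;z\bigl(x(z)-x(y)\bigr),
\]
i.e.\ the payment increment is controlled by the \emph{allocation} increment scaled by the type, not by the type increment scaled by the allocation. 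This corrected sandwich is what legitimately forces $dP=z\,dx(z)$ in the Stieltjes sense; your version, taken literally, would force $dP=x(z)\,dz$ and hence the wrong formula $P(z)=\int_0^z x(t)\,dt$. Since you go on to assert $dP=z\,dx(z)$ and integrate by parts correctly, the slip looks like a transcription error rather than a conceptual one, but as written the uniqueness argument does not go through and must be repaired by substituting the correct inequalities. With that fix (and noting that for the piecewise-constant allocation rules actually used in this paper the Stieltjes machinery collapses to the finite sum over jump points stated right after the lemma), the proof is complete.
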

 The payment formula simplifies further, since the underlying allocation rule is piecewise constant: $p_i(b_i, b_{-i}) =$ $  \sum_{j=1}^{l} z_j. \text{jump in } w(A_i( \cdot , b_{-i})) \text{ at } z_j$. Here, $z_j$s are the bid values at which the allocated bundle $A_i( . , b_{-i})$ changes.
 
Throughout, we will develop monotone allocations by first computing a partition,\footnote{A collection of pairwise-disjoint subsets whose union is $[m]$.} $\{P_i\}_{i\in [n]}$, of the indivisible goods $[m]$ and then allocating the $j$th highest (with respect to $w(\cdot)$) subset from the partition to the $j$th highest bidder. Formally, we call such allocations sorted. 
\begin{definition}[Sorted Allocation]
Given a partition $ \mathcal{P} = \{P_i\}_{i\in [n]} \in \Pi_n([m])$ of the indivisible goods $[m]$, which satisfies $w(P_1) \geq w(P_2) \geq \ldots \geq w(P_n)$, we say that a corresponding allocation, $\mathcal{S} =(S_1, \ldots, S_n) \in \Pi_n([m])$, is \emph{sorted} with respect to the given bids $(b_i)_{i\in[n]}$ iff $S_{\sigma(j)} = P_j$, for all $j \in [n]$, where $\sigma$ is the order in which the bids are sorted: $b_{\sigma(1)} \geq b_{\sigma(2)} \geq \ldots \geq b_{\sigma(n)}$.
\end{definition}

For the two problems \textsc{FA-EF1} and \textsc{FA-MMS} we will develop algorithms which will be \emph{bid oblivious}.\footnote{As mentioned previously, an analogous result for \textsc{FA-NSW} follows directly from the fact that the $\NSW$ objective is scale invariant.} In particular, the developed algorithms will find a partition $\{P_i\}_{i\in [n]}$ of the $m$ indivisible goods by only considering the function $w(\cdot)$, and not the bids $b_i$s. The input to all our algorithms will be $\langle [m], [n], w \rangle$; this will explicitly ensure that they are bid oblivious. 
Since the fairness notions are completely specified in terms of $w(\cdot)$, it is intuitive that a fair partition can be computed without considering the bids. The notable property of our algorithms is that they find a partition $\mathcal{P}$ which provides a ``universal'' approximation guarantee: as long as we perform a sorted allocation of $\mathcal{P}$ the stated approximation ratio is achieved, independent of the bids per se. That is, if the $j$th highest (with respect to $w(\cdot)$) subset of $\mathcal{P}$ is allocated to the $j$th highest bidder, then the stated approximation is obtained.   


For the two problems we first compute a partition $\mathcal{P}$, using a bid oblivious algorithm, and, then, perform a sorted allocation of $\mathcal{P}$. This ensures that the resulting allocation rule not only satisfies the desired approximation guarantee, but is also monotone. Therefore, via Myerson's Lemma, we obtain a DSIC mechanism. These observations imply that the underlying mechanism design problem reduces to developing bid-oblivious algorithms which find fair partitions with above-stated   universal approximation guarantee. The rest of the paper primarily addresses this algorithm design goal.

\section{ Proof of Theorem~\ref{thm:efone}}
\label{sec:SSEF1Proof}
This section presents a $1/2$-approximation algorithm for \textsc{FA-EF1}. Recall that in this problem $\EFone$ is defined with respect to a single, additive function $w(\cdot)$. Hence, if the subsets $P_i$s of a partition $\mathcal{P} = \{P_1, \ldots, P_n\}$ satisfy the $\EFone$ condition under $w$, any allocation of $\mathcal{P}$ (independent of which agent gets which subset $P_i$) will be $\EFone$. 

Our bid-oblivious approximation algorithm is simple: return \emph{any} partition $\mathcal{P}= \{P_1,\ldots, P_n\}$ which satisfies the $\EFone$ condition under $w$; such a partition can be computed in sorting-time by the round-robin algorithm~\cite{CAR16}. As mentioned above, any allocation of $\mathcal{P}$---in particular, the sorted allocation---is $\EFone$. This directly takes care of the constraints in \textsc{FA-EF1}. To address the objective, we prove that such a sorted allocation (independent of our choice of the partition $\mathcal{P}$ which satisfies $\EFone$) has social welfare at least half of the optimal. The approximation guarantee follows from an interesting result (Lemma~\ref{lemma:major}) which shows that all the partitions which are $\EFone$ (under an identical, additive function $w$) \emph{$1/2$-majorize} each other. 

 

\begin{definition}[$\beta$-Majorization:]
\label{def:major}
A sequence $(x_i)_{i=1}^{n}$ is said to $\beta \in \mathbb{R}_+$ majorize another sequence $ (y_i)_{i=1}^{n}$ iff 
\begin{align*}
\sum \limits_{i=1}^{k} x_{(i)}  & \geq \beta \sum \limits_{i=1}^{k} y_{(i)}  \qquad \ \text{ for all } 1 \leq k \leq n-1 \text{  and } \\ 
\sum \limits_{i=1}^{n} x_i & = \sum \limits_{i=1}^{n} y_i.
\end{align*}
\end{definition}

The following ``universal'' approximation guarantee holds for $\beta$-majorizing sequences (Appendix~\ref{sec:EF1Supp}): if a sequence $(x_i)_{i}$ $\beta$-majorizes another sequence $ (y_i)_{i}$, then, for any set of parameters $v_1 \geq v_2 \geq \ldots \geq v_n \geq 0$, we have 
$\sum_i v_i x_{(i)} \geq \beta \sum_i v_i y_{(i)} $. 
This guarantee, along with Lemma~\ref{lemma:major} (which shows that $1/2$ majorization holds between any two $\EFone$ partitions), imply that the sorted allocation of an arbitrary $\EFone$ partition is a $1/2$-approximate solution to \textsc{FA-EF1}.\footnote{This $1/2$-approximation guarantee  also extend to $\EFX$  \cite{CAR16} constraints. An allocation  $\mathcal{P}\in \Pi_n([m])$ is said to be $\EFX$ if $v_i(P_i) \geq v_i(P_j) - v_i(g) $ for all $i,j \in [n]$ and for all $ g \in P_j$. Note that an $\EFX$ allocation always exists in the current  setting \cite{PLA18} }

\begin{lemma}
\label{lemma:major}
Let $\mathcal{P} = (P_i)_{i=1}^{n}$ and $\mathcal{Q} = (Q_i)_{i=1}^{n}$ be two partitions (of the $[m]$ goods) which satisfy the $\EFone$ condition under the additive function $w$. Then, the sequence $(w(P_i))_i$ $1/2$-majorizes  the sequence $(w(Q_i))_i$.
\end{lemma}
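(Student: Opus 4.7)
The plan is to assume without loss of generality that both partitions are sorted in decreasing order of $w$-value (so that $w(P_1) \geq \dots \geq w(P_n)$ and similarly for $Q$) and to verify the prefix inequalities $\sum_{i=1}^k w(P_i) \geq \tfrac{1}{2} \sum_{i=1}^k w(Q_i)$ for every $1 \leq k \leq n-1$. The equal-sum condition in Definition~\ref{def:major} is automatic, since $\sum_i w(P_i) = w([m]) = \sum_i w(Q_i)$. I will write $W := w([m])$ and let $G_k$ denote the total weight of the $k$ heaviest goods under $w$.

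My first step is to upper-bound the $Q$-side using its $\EFone$ property. Since $Q$ is $\EFone$, for each $j$ there exists $g \in Q_j$ with $w(Q_n) \geq w(Q_j) - w(g)$, so choosing $g_j := \argmax_{g \in Q_j} w(g)$ yields $w(Q_j) \leq w(Q_n) + w(g_j)$. Summing over $j=1,\dots,k$ gives the decomposition
\[
\sum_{j=1}^k w(Q_j) \;\leq\; k \cdot w(Q_n) \;+\; \sum_{j=1}^k w(g_j).
\]
The second step is to dominate each of these two terms by $\sum_{i=1}^k w(P_i)$. For the first term, $w(Q_n) \leq W/n$ since the minimum bundle is at most the average, while a standard rearrangement shows that the top-$k$ sum of any decreasingly-sorted partition satisfies $\sum_{i=1}^k w(P_i) \geq kW/n$ (the average of the top $k$ dominates the average of the bottom $n-k$, from which $(n-k)S \geq k(W - S)$ with $S = \sum_{i \leq k} w(P_i)$). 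Hence $k \cdot w(Q_n) \leq \sum_{i=1}^k w(P_i)$. For the second term, the $g_j$ come from disjoint bundles and are therefore $k$ distinct goods, so $\sum_{j=1}^k w(g_j) \leq G_k$; and conversely any $k$ goods are covered by at most $k$ of the $P$-bundles, whose total $w$-mass is at most $\sum_{i=1}^k w(P_i)$, so $G_k \leq \sum_{i=1}^k w(P_i)$.

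Combining these estimates gives $\sum_{j=1}^k w(Q_j) \leq 2 \sum_{i=1}^k w(P_i)$, which is precisely the desired $1/2$-majorization inequality at level $k$. The main obstacle is choosing the right decomposition of the top-$k$ $Q$-mass induced by $\EFone$: once the sum is split into a ``smallest-bundle'' term and a ``heaviest-goods'' term, each piece is comparable to the very same $P$-quantity through elementary partition arguments. Notably, this argument uses $\EFone$ only on $Q$; the partition $P$ need only be sorted, which suggests the guarantee is a one-sided structural fact about EF1 partitions rather than a symmetric one.
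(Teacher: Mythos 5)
Your proof is correct, and it reaches the conclusion by a genuinely cleaner route than the paper's. Both arguments share the same second half: the heaviest $k$ goods are covered by at most $k$ bundles of $\mathcal{P}$, so $G_k \leq \sum_{i=1}^k w(P_i)$, which handles the $\sum_j w(g_j)$ term. The difference is in how the other term is controlled. The paper invokes $\EFone$ on $\mathcal{Q}$ against a pivot index $\hat{i} := \max\{i : w(P_i) > w(Q_i)\}$, bounds $w(Q_j) \leq w(Q_{\hat{i}}) + w(g_j) \leq w(P_{\hat{i}}) + w(g_j)$, and then needs a separate second case (using only the equal-sum identity) for prefixes $\ell > \hat{i}$ where $w(P_{\hat{i}}) \leq w(P_k)$ no longer holds. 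You instead compare every $Q_j$ to the \emph{minimum} bundle $Q_n$ and observe $k\, w(Q_n) \leq kW/n \leq \sum_{i=1}^k w(P_i)$ by a pure averaging argument on the sorted partition $\mathcal{P}$; this single estimate works uniformly for all $k$ and eliminates the case analysis entirely. Your approach also makes explicit something only implicit in the paper's proof: the $\EFone$ hypothesis is needed only on $\mathcal{Q}$, so the statement is really that any partition $1/2$-majorizes any $\EFone$ partition. The only loose end worth a sentence in a final write-up is the degenerate case of empty bundles $Q_j$ (where no good $g_j$ exists to remove); there $w(Q_j)=0$ and the term can simply be dropped from the decomposition, so nothing breaks.
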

\begin{proof}
Reindexing does not affect the majorizing order. Hence, without loss of generality, we assume that $w(P_1)\geq w(P_2) \geq \ldots \geq w(P_n)$ and $w(Q_1)  \geq \ldots \geq w(Q_n)$. Also, we consider the partitions to be distinct, otherwise, the claim holds trivially. Write index $\hat{i} :=\max \{ i \in [n]  \mid w(P_i) > w(Q_i)\}$; such an index $\hat{i}$ exists since the partitions are distinct and $ \sum_{i=1}^n w(P_i) = \sum_{i=1}^n w(Q_i)$.
For each index $\ell \in [n-1]$ we establish the required inequality, $\sum_{i=1}^\ell w(P_i) \geq \frac{1}{2} \sum_{i=1}^\ell w(Q_i)$, by considering two cases:

\noindent {\textbf{Case 1:}}  $\ell \leq \hat{i}$: Since $\mathcal{Q} = (Q_i)_i$ is an $\EFone$ partition, $w(Q_k) \leq w(Q_{\hat{i}})+ w(g_k)$, for all  $ k \in [n]$, where $g_k \in \arg \max \limits_{g\in Q_k} w(g) $. Therefore,
 
\begin{align}
 \sum_{k=1}^{\ell} w(Q_k) \leq \ell w(Q_{\hat{i}})+ \sum_{k=1}^\ell w(g_k) \leq \ell w(P_{\hat{i}}) + \sum_{k=1}^\ell w(g_k) \label{eqn:one}  
\end{align}
Write $g_{(1)},g_{(2)},\ldots,g_{(\ell)}$ to denote the $\ell$ highest (with respect to $w(\cdot)$) goods in $[m]$ and note that  $\sum_{k=1}^\ell w(g_k) \leq \sum_{k=1}^\ell w(g_{(k)})$. Let $P_{(1)}, P_{(2)}, \ldots, P_{(t)}$ denote the minimum collection of subsets in the partition whose union contains these goods $H:=\{ g_{(1)},g_{(2)},\ldots, g_{(\ell)}\}$; here, $t$, the number of subsets required to cover $H$ is at most $\ell$. Also, note that $P_{(i)}$s are pairwise disjoint and each good, $g_{(j)}$ belongs to exactly one of them. In addition, since $P_i$s are indexed such that $w(P_1) \geq w(P_2) \geq \ldots \geq w(P_n)$, we have 
\begin{align}
\sum \limits_{k=1}^t w(P_{(k)})  \leq  \sum \limits_{k=1}^t w(P_k) \leq \sum \limits_{k=1}^\ell w(P_k) .
\label{eq:two}
\end{align}

Therefore,
\begin{align*}
\sum \limits _{k=1}^\ell w(Q_k)  &\leq \ell w(P_{\hat{i}}) + \sum \limits_{k=1}^\ell w(g_k) \leq \ell w(P_{\hat{i}}) + \sum \limits_{k=1}^t w(P_{(k)}) \tag{using Eq (\ref{eqn:one}) and definition of $P_{(i)}$s} \\ & \leq \ell w(P_{\hat{i}}) + \sum \limits_{k=1}^\ell w(P_{k})  \tag{using (\ref{eq:two})}.
\end{align*}

\normalsize
The fact that $w(P_{\hat{i}}) \leq w(P_{k})$ for all $ k \leq \ell \leq \hat{i}$ implies $\sum_{k=1}^\ell w(Q_k) \leq 2 \sum_{k=1}^\ell w(P_{k})$ for all $\ell \leq \hat{i}$. Hence, $1/2$-majorization holds for all $\ell \leq \hat{i}$. 

\noindent {\textbf{Case 2:}}  $\ell >\hat{i}$: Note that $\sum_{k=1}^n w(P_k)=\sum_{k=1}^n w(Q_k)$. Splitting the sums gives us $\sum_{k=1}^\ell w(P_k) + \sum_{k=\ell+1}^n w(P_k)=\sum_{k=1}^\ell w(Q_k) + \sum_{k=\ell +1}^n w(Q_k)$.  That is, 
$\sum_{k=1}^\ell w(P_k)  =\sum_{k=1}^\ell w(Q_k) + \sum _{k=\ell +1}^n (w(Q_k) - w(P_k))$.
By definition of $\hat{i}$, for all indices $k \geq \ell > \hat{i}$, we have $w(Q_k) \geq w(P_k)$. Therefore, the second summand on the right-hand side of the previous equation is non-negative. This implies $\sum_{k=1}^\ell w(P_k) \geq \sum_{k=1}^\ell w(Q_k)$.
This inequality, along with the non-negativity of $w$, directly establish $1/2$-majorization for all indices $\ell \geq \hat{i}$: $2 \sum_{k=1}^\ell w(P_k) \geq \sum_{k=1}^\ell w(Q_k)$.
\end{proof}


\noindent  From Lemma~\ref{lemma:major} and the universal approximation guarantee of majorizing sequences we get that the social welfare from any two sorted $\textsc{EF1}$ allocations   $\mathcal{P}$ and  $\mathcal{Q}$ is related by  $SW(\mathcal{P}) \geq \frac{1}{2} SW(\mathcal{Q})$. 
This completes the proof of Theorem~\ref{thm:efone} , which we restate below.

\EFONE*

\section{Proof of Theorem~\ref{thm:mms}}
\label{sec:SSMMSProof}

This section addresses \textsc{FA-MMS}. Since computing  $\mu$ (the maximin share) is {\rm NP}-hard, 
we use the PTAS of~\cite{WOE97} to compute an estimate $\bmu$ in polynomial time which satisfies $(1-\varepsilon) \mu \leq \bmu \leq \mu$, for fixed constant $\varepsilon \in (0, 1/2)$. 
Write allocation $\mathcal{O} = (O_i)_{i=1}^n \in \Pi_n([m])$ to denote an optimal solution of \textsc{FA-MMS}. Note that the feasibility of $\mathcal{O}$ ensures $w(O_i) \geq \mu $ for all $i \in [n]$. Let {\rm SW}$(\mathcal{O})$ denote the  social welfare of $\mathcal{O}$. Throughout this section, by reindexing, we assume that the agents are indexed in non-increasing order of their valuations, i.e., $v_1 \ge v_2 \ge \ldots \ge v_n \geq 0$. 
 
Our algorithm, $\ALG$, finds a partition $\mathcal{P} = (P_i)_{i=1}^n$ which satisfies $w(P_i) \geq \frac{1}{2} \bmu \geq \left( \frac{1 - \varepsilon}{2} \right) \mu$.  We will show that a sorted allocation of $\mathcal{P}$, say $\mathcal{A} = (A_i)_{i=1}^n$, achieves social welfare at least as much as {\rm SW}$(\mathcal{O})$. Note that $\mathcal{A}$ will be a $\left( \frac{1 - \varepsilon}{2} \right)$-approximate $\MMS$ allocation. Given that the monotone allocation rule returns sorted allocation $\mathcal{A}$, we get a DSIC mechanism that satisfies Theorem~\ref{thm:mms}.



$\ALG$ partitions the goods in two sets based on their $w$-value: high-valued goods $\overline{H}:= \{ g \in [m] \mid w(g) \ge \overline{\mu}\}$ and low-valued goods $\overline{L}:= \{ g \in [m] \mid w(g) < \overline{\mu}\}$. It further divides $\overline{L}$, into two disjoint subsets, tiny goods, $\overline{L}_1 := \{ g \in [m] \mid w(g) < \frac{\overline{\mu}}{2} \}$ and medium-valued goods, $\overline{L} \setminus \overline{L}_1$. Clearly, the set $\overline{L} \setminus \overline{L}_1$ contains all the goods that are neither in $\overline{H}$ nor in $\overline{L}_1$. For analysis, we also consider two sets $H := \{g \mid w(g) \ge \mu \}$ and $L := \{g \mid w(g) < \mu \}$. Since $\bmu \leq \mu$, we have $\overline{L} \subseteq L$.

We will primarily focus on the case wherein $\textsc{Alg}$ returns the partition $\mathcal{P}$ after Phase 4 (Step 19).\footnote{ This implies $\tau < n$.} The analysis of the remaining cases (in which either \textsc{Alg} terminates after Phase 1 ($\tau^{'} \geq n $) or \textsc{Alg} terminates after Phase 2 ($\tau \geq n$)) is fairly similar.  

We now show in Lemma~\ref{lemma:validmms} that $\mathcal{P}$ is approximately fair.   
 {

\begin{algorithm}[ht!]
    \SetKwInOut{Input}{Input}
    \SetKwInOut{Output}{Output}
    \SetAlgoNoLine
    \renewcommand{\thealgocf}{}
    \DontPrintSemicolon
    \SetAlgorithmName{$\ALG$}{ }{ }
    \caption{Bi-Criteria Approximation Algorithm for \textsc{FA-MMS} }
        \label{alg:OptMMS}
    \Input{Additive $w:2^{[m]} \mapsto \mathbb{R}_+$ and estimate $\overline{\mu}$} 
    \Output{Approx.~$\MMS$ partition: $ \mathcal{P}$ = $\{P_1, \ldots,P_n\} $}
    \nonl           --------\texttt{Phase 0: Initialization}-------- \;
                $P_i = \emptyset$  for all $ i \in [n],     \overline{H} = \{g \in [m] \mid w(g) \geq \overline{\mu} \},   \overline{L} =[m] \setminus \overline{H},\overline{L}_1 = \{ g \in \overline{L} \mid w(g) < \frac{\overline{\mu}}{2}\}$  \;

     \nonl      --------\texttt{Phase 1: Medium Valued Goods}-------- \;
    For all the goods  $g  \in \overline{L} \backslash \overline{L}_1 $, form singleton subsets $P_1, P_2 , \ldots P_{\tau^{'}}$\;
    \If{$\tau^{'} \geq n $}{ Return the partition $\{ P_1, P_2,  \ldots, P_{n-1}, [m]\setminus \cup_{i\in [n-1]} P_i \}$}
      \nonl             --------\texttt{Phase 2: Tiny Valued Goods}-------- \;
        Initialize index $a = \tau^{'} +1$  \;
    \While{$\overline{L}_1 \neq \emptyset$}{
    \While { $w(P_a) < \frac{\overline{\mu}}{2}$ }{
Pick a good $g \in \overline{L}_1$ and update $  P_a \leftarrow P_a \cup \{g\}$ and $ \overline{L}_1 \leftarrow \overline{L}_1 \setminus \{g \}$\;
      }
      $a \leftarrow a+1$ \;
      }
       \If{for the last formed subset $P_a$ we have $w(P_a) < \frac{\overline{\mu}}{2}$ }{
            $E \leftarrow P_a$  \hspace{5pt} \tcp{\textsc{ set aside leftover goods  }}
             Set $P_a = \emptyset$,     and        $a \leftarrow a -1$
 \hspace{5pt} \tcp{\textsc{ empty  $P_a$ }}
}
            Let $\tau$ be the total number of nonempty subsets, $\{P_1, P_2, \ldots, P_\tau \}$, populated so far \;
                  \If{$\tau \geq  n $}{
      Return the partition $\{ P_1, P_2,  \ldots, P_{n-1}, [m]\setminus \cup_{i\in [n-1]} P_i \}$ }

           \nonl        --------\texttt{Phase 3: High Valued Goods}-------- \;
              
      Sort $\overline{H}$ in a non-decreasing order, i.e., $ \overline{\mu} \leq w(g_1^{'}) \leq w(g_2^{'}) \leq \ldots , \leq  w(g_{|H|}^{'}) $ \;
    Form $(n- \tau -1)$ singleton subsets, i.e., $P_{\tau+i} = \{ g_i^{'}\}$  for $1 \leq i \leq n- \tau -1$\;
        \nonl       --------\texttt{Phase 4: Leftover Goods}-------- \;

     $P_n \leftarrow E \cup \{g_{n-\tau}^{'},\ldots,g_{|H|}^{'}\} $ \;
    Return the partition $\{P_1,P_2,\ldots,P_n\}$
    
  \end{algorithm}
 }
  \normalfont
\begin{lemma}
\label{lemma:validmms}
 The algorithm $\ALG$ computes, in polynomial time, an $n$-partition $\mathcal{P} = (P_i)_{i=1}^n$ which satisfies $w(P_i) \ge \frac{\overline{\mu}}{2}$ for all $i \in [n]$.  
\end{lemma}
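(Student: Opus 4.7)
The plan is to verify $w(P_i) \geq \bmu/2$ for every $i$ by walking through the algorithm phase by phase and locating, in each terminal branch, where the bound could fail. Three structural facts come for free from the construction. A Phase~1 subset is a singleton $\{g\}$ with $g \in \overline{L} \setminus \overline{L}_1$, so $w(P_i) \in [\bmu/2, \bmu)$. A Phase~2 subset $P_a$ with $\tau' < a \leq \tau$ is grown one tiny good at a time until its weight first crosses the threshold $\bmu/2$; since each added good has weight $< \bmu/2$, the final value lies in $[\bmu/2, \bmu)$. A Phase~3 singleton contains a good of $\overline{H}$ and so has weight at least $\bmu$. Moreover, the leftover $E$, if populated, satisfies $w(E) < \bmu/2$ by the very condition that caused it to be set aside.

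For the two early-termination branches the argument is short. If Phase~1 ends with $\tau' \geq n$, the returned $P_n = [m] \setminus \bigcup_{i<n} P_i$ absorbs at least the $n$th medium-valued good and so contributes $\geq \bmu/2$. If Phase~2 ends with $\tau \geq n$, the returned $P_n$ contains the original Phase~2 subset $P_n$, whose weight is already $\geq \bmu/2$. In both branches, $P_1,\ldots,P_{n-1}$ are covered by the structural facts.

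The delicate case is when $\tau < n$ and $\ALG$ exits through Phase~4, giving $P_n = E \cup \{g'_{n-\tau}, \ldots, g'_h\}$ where $h := |\overline{H}|$. If $h \geq n - \tau$ then $g'_{n-\tau} \in P_n$, so $w(P_n) \geq \bmu$, and we are done. The crux, and the main obstacle, is ruling out $h \leq n - \tau - 1$ (which would make $P_n = E$ or worse). For this I would invoke an $\MMS$-optimal partition $(M_1, \ldots, M_n)$, guaranteed by the definition of $\mu$, with $w(M_j) \geq \mu \geq \bmu$ for all $j$. At most $h$ of its blocks can contain a good from $\overline{H}$, so at least $n-h$ blocks lie entirely in $\overline{L}$ and each contributes $\geq \bmu$, yielding $w(\overline{L}) \geq (n-h)\bmu$. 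On the other hand, the greedy construction gives
\[
w(\overline{L}) \;=\; \sum_{i=1}^{\tau} w(P_i) + w(E) \;<\; \tau\,\bmu + \tfrac{\bmu}{2},
\]
using the strict bounds $w(P_i) < \bmu$ from the structural facts and $w(E) < \bmu/2$. Comparing the two inequalities forces $h > n - \tau - \tfrac{1}{2}$, hence $h \geq n - \tau$, closing the case.

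Polynomial runtime is immediate, since each phase does a single pass over $[m]$ plus one sort of $\overline{H}$ in Phase~3. I expect the only nontrivial move to be the $\MMS$-budget accounting that delivers $h \geq n - \tau$: it is the single point where the weak upper bounds on the greedy subsets must be reconciled with a lower bound coming from the feasibility of an $\MMS$ partition. Every other step is bookkeeping.
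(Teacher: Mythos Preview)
Your argument is correct. The phase-by-phase bounds $\bmu/2 \le w(P_i) < \bmu$ for $i \le \tau$ and $w(E) < \bmu/2$ are exactly right, and the counting step---at least $n-h$ blocks of an $\MMS$-achieving partition sit inside $\overline{L}$, forcing $(n-h)\bmu \le w(\overline{L}) < \tau\bmu + \bmu/2$ and hence $h \ge n-\tau$---is clean and sufficient.

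This is the same budget-accounting idea the paper uses, but your execution is more direct. The paper works with two thresholds simultaneously (the sets $H,L$ cut at $\mu$ and $\overline{H},\overline{L}$ cut at $\bmu$), invokes the optimal \textsc{FA-MMS} solution $\mathcal{O}$ rather than an arbitrary $\MMS$ partition, and proves the intermediate Claim~\ref{claim5} ($|I| \ge |I^*|$) before deducing $|H| \ge n - |I|$. Your route avoids the $H/L$ layer and the index sets $I,I^*$ entirely, arriving at $|\overline{H}| \ge n - \tau$ in one step. The trade-off is that the paper's extra machinery (Claims~\ref{claim1}--\ref{claim5}) is reused verbatim in the majorization proof (Lemma~\ref{lemma:optimality}), so the indirection is amortised; for the fairness lemma in isolation, your version is tighter.
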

First, we note the following useful  property of the optimal allocation $\mathcal{O} =(O_i)_i$: besides $O_1$ (the bundle allocated to the highest-valued agent), any bundle $O_i$ is either completely contained in $L$ or it is disjoint from it. 
\begin{claim}
\label{claim1}
For $i \neq 1$, either $O_i \subseteq L$ or $O_i \subseteq H$.
\end{claim}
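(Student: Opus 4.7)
The plan is to prove Claim~\ref{claim1} by a swap argument combined with a tie-breaking refinement of the optimal allocation. Concretely, among all social-welfare optimal allocations satisfying the \MMS{} constraint, I would choose $\mathcal{O}=(O_i)_{i=1}^n$ to be one that additionally maximizes $w(O_1)$ (the total public weight assigned to the highest-valued bidder); such a maximizer exists because there are only finitely many allocations. Then I would show that this refined optimum already has the claimed structure.

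\paragraph{The contradiction step.} Suppose, towards a contradiction, that there is an agent $i \neq 1$ whose bundle $O_i$ contains both a high-valued good $h \in H$ (so $w(h) \geq \mu$) and a low-valued good $\ell \in L$. Define a new allocation $\mathcal{O}'$ that agrees with $\mathcal{O}$ everywhere except $O'_1 := O_1 \cup \{\ell\}$ and $O'_i := O_i \setminus \{\ell\}$. I would verify feasibility in two lines: agent $1$'s bundle only grows, so $w(O'_1) \geq w(O_1) \geq \mu$; agent $i$'s bundle still contains $h$, so $w(O'_i) \geq w(h) \geq \mu$; and every other $O_j$ is unchanged. Hence $\mathcal{O}'$ is \MMS-feasible.

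\paragraph{Welfare comparison and conclusion.} The change in social welfare from $\mathcal{O}$ to $\mathcal{O}'$ is
\[
\mathrm{SW}(\mathcal{O}') - \mathrm{SW}(\mathcal{O}) \;=\; (v_1 - v_i)\, w(\ell) \;\geq\; 0,
\]
since agents are indexed in non-increasing order of $v_i$ and $w(\ell) \geq 0$. Therefore $\mathcal{O}'$ is also an optimal \MMS-feasible allocation. But $w(O'_1) = w(O_1) + w(\ell) > w(O_1)$ (using that $\ell$ is a good with $w(\ell) > 0$; the trivial case $w(\ell) = 0$ can be handled by simply discarding zero-weight goods at the outset, or by a lexicographic strengthening of the tie-breaking rule), contradicting the choice of $\mathcal{O}$ as a welfare-optimum that maximizes $w(O_1)$. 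Hence no such $i, h, \ell$ exist, and for every $i \neq 1$ the bundle $O_i$ lies entirely in $H$ or entirely in $L$.

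\paragraph{Anticipated obstacle.} The only subtlety I foresee is the case $v_1 = v_i$, where the swap yields a welfare tie; this is precisely why the tie-breaking refinement ``maximize $w(O_1)$ among welfare-optimal feasible allocations'' is needed to force the strict improvement in the secondary objective. A related minor point is dealing with zero-weight goods, which can be addressed either by pre-pruning them or by replacing the secondary objective with a lexicographic rule (first maximize $w(O_1)$, then $|O_1|$). Neither complication affects the core one-good swap, which is the heart of the argument.
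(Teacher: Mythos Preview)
Your argument is correct and rests on the same swap idea as the paper: if some $O_i$ with $i\neq 1$ meets both $H$ and $L$, move weight from agent $i$ to agent $1$ while leaving a single $H$-good in $O_i$ so that the $\MMS$ constraint is preserved. The paper carries this out in one shot, transferring all of $O_i\setminus\{g'\}$ (for some $g'\in O_i\cap H$) to $O_1$ and asserting that this ``increases the social welfare'' because $v_1\geq v_i$; you instead move a single low good $\ell$ and obtain the non-strict inequality $(v_1-v_i)\,w(\ell)\geq 0$.

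The one genuine addition in your write-up is the tie-breaking refinement (pick, among welfare optima, one that maximizes $w(O_1)$). This is precisely what patches the degenerate case $v_1=v_i$, where the swap does not strictly improve welfare and bare optimality yields no contradiction; under your secondary objective the swap still strictly improves, giving the contradiction. Your treatment of zero-weight goods (prune them, or add a further lexicographic criterion on $|O_1|$) is also sound. Note that this refined choice of $\mathcal{O}$ is harmless downstream: every later use of $\mathcal{O}$ in the section needs only that it be optimal for \textsc{FA-MMS}, which your $\mathcal{O}$ is.
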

\begin{proof}
For contradiction, say there exists a bundle $O_i$ (with $i \neq 1$) such that both the subsets $O_i \cap L$ and $O_i \cap H$ are nonempty. Write $g'$ to be a good in $O_i \cap H$. Then, by swapping all the goods $O_i \setminus \{g' \}$ from agent $i$ to agent 1 increases the social welfare; recall that $v_1 \geq v_i$. Since $w(g') \geq \mu$, even after the swapping agent $i$ is left with the singleton $\{g'\}$, which preserves the feasibility of the resulting allocation, i.e., the $\MMS$ requirement for agent $i$ is maintained. This contradicts the optimality of $\mathcal{O}$. 
\end{proof}

Note that $L$ and $H$ are disjoint. Hence, $O_i \subseteq H$ implies $O_i \cap L = \emptyset$. Let $I^* := \{ i \in [n] \mid O_i \subseteq L \}$ denote the bundles in $\mathcal{O}$ which are completely contained in $L$ . The following claim directly follows from the definition of $I^*$.  
\begin{claim}
\label{claim2}
$\sum_{i \in I^*} w(O_i) \le w(L)$.
\end{claim}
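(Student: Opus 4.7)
The claim to prove is $\sum_{i \in I^*} w(O_i) \le w(L)$, where $I^* = \{i \in [n] \mid O_i \subseteq L\}$ and $\mathcal{O} = (O_i)_i$ is a fixed optimal allocation of \textsc{FA-MMS}.

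The plan is a direct, essentially one-line argument that just unpacks the definition of $I^*$ and uses additivity of $w$. First, I would recall that $\mathcal{O} \in \Pi_n([m])$ is an $n$-partition of $[m]$, so the bundles $(O_i)_{i\in[n]}$ are pairwise disjoint. In particular, the subcollection $\{O_i : i \in I^*\}$ consists of pairwise disjoint subsets. Since $w$ is additive (and nonnegative) on $2^{[m]}$, additivity extends to finite disjoint unions, giving
\[
\sum_{i \in I^*} w(O_i) \;=\; w\!\left(\bigcup_{i \in I^*} O_i\right).
\]

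Second, I would use the defining property of $I^*$: by definition, $O_i \subseteq L$ for every $i \in I^*$. Taking the union over $i \in I^*$ therefore yields $\bigcup_{i \in I^*} O_i \subseteq L$. Combined with nonnegativity of $w$ (so that $w(L) = w\!\left(\bigcup_{i \in I^*} O_i\right) + w\!\left(L \setminus \bigcup_{i \in I^*} O_i\right) \ge w\!\left(\bigcup_{i \in I^*} O_i\right)$), this gives the desired bound
\[
\sum_{i \in I^*} w(O_i) \;=\; w\!\left(\bigcup_{i \in I^*} O_i\right) \;\le\; w(L).
\]

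There is no real obstacle here: the statement is essentially the monotonicity of an additive, nonnegative set function applied to a disjoint subcollection of a partition. The only thing worth being slightly careful about is to invoke the facts (i) that $\mathcal{O}$ is a partition (so the $O_i$'s are disjoint, enabling additivity to collapse the sum into the weight of the union) and (ii) that $w$ is additive and nonnegative on singletons, which is stated in the preliminaries and makes $w$ monotone with respect to set inclusion.
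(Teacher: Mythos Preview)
Your argument is correct and matches the paper's treatment: the paper states that this claim ``directly follows from the definition of $I^*$'' without further proof, and your write-up is exactly that direct unpacking using disjointness of the $O_i$'s and monotonicity of the additive, nonnegative $w$.
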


Analogous to $I^*$, we consider subsets in the computed partition which are completely contained in $L$; in particular, write $I := \{ i \in [n]  \mid P_i \neq \emptyset \text{ and } P_i \subseteq L \}$. We will establish that all the subsets in the computed partition are nonempty, hence set $I$ will be the indices of all the subsets which are completely contained in $L$. 
Note that $\ALG$ populates the first $\tau$ (see Step 13) subsets using only low-valued goods (using $\overline{L} \subseteq L$ in Phase 1 and 2), hence, $ [\tau] \subseteq I$. 
The next claim\footnote{In the interest of space, the proof of this claim is omitted.} bounds the weight of each of the first $\tau$ subsets formed by $\ALG$. 
\begin{claim}
\label{claim3}
For all $ i \in [\tau]$, we have $\frac{\overline{\mu}}{2} \le w(P_i) \le \mu$.
\end{claim}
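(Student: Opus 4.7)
The plan is to verify the two-sided bound by splitting $[\tau]$ according to the phase in which the subset was built, since $\ALG$ uses two qualitatively different constructions for low-valued goods.

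First I would handle Phase~1 subsets. These are the singletons $P_i = \{g\}$ with $g \in \overline{L}\setminus \overline{L}_1$, indexed $i \in [\tau']$. By the definitions of $\overline{L}$ and $\overline{L}_1$, such a good satisfies $\frac{\overline{\mu}}{2} \le w(g) < \overline{\mu}$. Combined with $\overline{\mu} \le \mu$, this immediately yields $\frac{\overline{\mu}}{2} \le w(P_i) \le \mu$ for every $i \in [\tau']$.

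Next I would handle the Phase~2 subsets, indexed $i \in \{\tau'+1, \dots, \tau\}$, which are built from tiny goods in $\overline{L}_1$. The lower bound $w(P_i)\ge \frac{\overline{\mu}}{2}$ is essentially structural: the outer \texttt{while} loop only moves to index $a+1$ once the inner loop has pushed the running weight of $P_a$ to at least $\frac{\overline{\mu}}{2}$; moreover, the only subset that might fall short of this threshold is the very last one filled, which is explicitly set aside as $E$ and its index discarded from the count $\tau$ (Steps~10--12). Hence every surviving $P_i$ with $i \in \{\tau'+1, \ldots, \tau\}$ meets the lower bound. For the upper bound, I would argue by looking at the last good $g$ added to $P_i$: at the moment immediately before this addition, the inner \texttt{while} guard was still satisfied, so $w(P_i \setminus \{g\}) < \frac{\overline{\mu}}{2}$; and because $g \in \overline{L}_1$, we have $w(g) < \frac{\overline{\mu}}{2}$. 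Summing gives $w(P_i) < \overline{\mu} \le \mu$.

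Putting the two phases together covers all indices $i\in[\tau]$, proving the claim. The only subtlety worth flagging in writing is the treatment of the leftover set $E$: one must make clear that the ``reset'' in Steps~11--12 is precisely what ensures the lower bound holds for every index that remains in $[\tau]$, so the upper bound argument above is never applied to a prematurely terminated subset. Polynomial-time runtime is immediate since $\ALG$ touches each good a constant number of times, but that is inherited from the algorithm's description and need not be reargued for this claim.
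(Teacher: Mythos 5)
Your proof is correct and follows essentially the same route as the paper's: Phase~1 singletons are bounded directly from the definition of $\overline{L}\setminus\overline{L}_1$, the Phase~2 lower bound comes from the loop guard together with the discarding of the leftover set $E$, and the upper bound comes from examining the last good added to each Phase~2 subset. No gaps; nothing further is needed.
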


Besides the leftover goods ($E$ in Step 11), $\overline{L}$ is entirely distributed among $\{P_1, P_2, \ldots, P_\tau \}$. This leads to the following upper bound. 
\begin{claim}
\label{claim4}
$w(\overline{L}) \le \sum_{i=1}^\tau w(P_i) + \frac{\mu}{2}$.
\end{claim}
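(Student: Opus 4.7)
The plan is to track exactly how the algorithm distributes the goods in $\overline{L}$ across the first $\tau$ subsets, and then bound the weight of whatever is set aside as leftover.

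First, I would decompose $\overline{L}$ according to how $\ALG$ processes it: in Phase 1, every good in $\overline{L}\setminus\overline{L}_1$ is placed as a singleton into one of $P_1,\ldots,P_{\tau'}$; in Phase 2, the goods in $\overline{L}_1$ are packed into subsets $P_{\tau'+1},\ldots$ until each exceeds weight $\overline{\mu}/2$. The only goods of $\overline{L}$ that are \emph{not} placed in $P_1,\ldots,P_\tau$ are those (if any) diverted into the leftover set $E$ at Step~11. Consequently
\[
w(\overline{L}) \;=\; \sum_{i=1}^{\tau} w(P_i) \;+\; w(E).
\]

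Next I would bound $w(E)$. By construction, $E$ is set to the last partial bundle $P_a$ only when $w(P_a)<\overline{\mu}/2$; in every other case $E=\emptyset$. Hence in all cases $w(E)<\overline{\mu}/2$. Combining this with the standing estimate $\overline{\mu}\le \mu$ (from the PTAS of~\cite{WOE97}) gives $w(E)<\mu/2$, and substituting into the displayed equation yields
\[
w(\overline{L}) \;<\; \sum_{i=1}^{\tau} w(P_i) \;+\; \frac{\mu}{2},
\]
which is the claimed inequality.

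The step that deserves a little care is making sure this accounting is valid in the ``Phase 4'' case addressed in this section, namely $\tau<n$; here $E$ is well-defined and the $\overline{L}\setminus\overline{L}_1$ goods from Phase~1 together with the $\overline{L}_1$ goods from Phase~2 partition $\overline{L}\setminus E$, so nothing from $\overline{L}$ leaks into the singletons formed from $\overline{H}$ in Phase~3 or into $P_n$ in Phase~4 (other than through $E$). Once that is explicit, the argument is essentially a one-line bookkeeping computation and no further estimate is needed.
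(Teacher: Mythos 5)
Your proof is correct and follows essentially the same route as the paper's: both decompose $\overline{L}$ as the disjoint union of $E$ with $P_1,\ldots,P_\tau$ and bound $w(E)$ by $\overline{\mu}/2 \le \mu/2$ using the condition under which $E$ is set aside in Step~11. Your version just spells out the bookkeeping (and the $E=\emptyset$ case) a little more explicitly than the paper does.
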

\begin{proof}
We first note that $\overline{L} = E \cup (P_i)_{i=1}^{\tau}$. This implies $w(\overline{L}) = \sum_{i = 1}^{\tau} w(P_i) + w(E) \leq \sum_{i = 1}^{\tau} w(P_i) + \mu/2 $. The last inequality follows from the fact that $\overline{\mu} \leq \mu$. 
\end{proof}

Recall that $I^*$ and $I$ denote the indices of bundles in $\mathcal{O}$ and $\mathcal{P}$ respectively, which are completely contained in $L$.
We now compare the cardinalities of these sets. 
\begin{claim}
\label{claim5}
$| I| \geq |I^*|$.
\end{claim}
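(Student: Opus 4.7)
To establish $|I| \geq |I^*|$, my plan is to sandwich both quantities around $\tau + k$, where $k := |L \cap \overline{H}|$ is the number of ``medium'' goods whose weight satisfies $\overline{\mu} \leq w(g) < \mu$.

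First I will upper bound $|I^*|$. By feasibility of $\mathcal{O}$, every bundle $O_i$ has $w(O_i) \geq \mu$, so in particular $|I^*|\,\mu \leq \sum_{i \in I^*} w(O_i) \leq w(L)$ by Claim~\ref{claim2}. I then decompose $w(L) = w(\overline{L}) + w(L \cap \overline{H})$. Claim~\ref{claim4} combined with the per-bundle upper bound $w(P_i) \leq \mu$ from Claim~\ref{claim3} yields $w(\overline{L}) \leq \tau \mu + \mu/2$, and since each good in $L \cap \overline{H}$ has weight strictly less than $\mu$, I get $w(L \cap \overline{H}) < k\mu$ (and $0$ when $k = 0$). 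Dividing through by $\mu$ gives $|I^*| < \tau + k + 1/2$, and integrality of $|I^*|$ forces $|I^*| \leq \tau + k$.

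Next I will lower bound $|I|$. Since Phases 1 and 2 of $\ALG$ draw only from $\overline{L} \subseteq L$, we immediately have $[\tau] \subseteq I$. In Phase 3, the goods in $\overline{H}$ are processed in non-decreasing order of $w$, so the first $k$ sorted goods $g'_1, \ldots, g'_k$ are precisely those in $\overline{H} \cap L$, while $g'_{k+1}, \ldots, g'_{|\overline{H}|}$ lie in $H$. In the generic case $k \leq n - \tau - 1$, the singletons $P_{\tau+1}, \ldots, P_{\tau+k}$ all lie in $L$, yielding $|I| \geq \tau + k \geq |I^*|$ as needed.

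The remaining subcase $k > n - \tau - 1$ requires a bit more care: here \emph{every} Phase-3 singleton sits in $L$, giving $|I| \geq n - 1$, which already suffices unless $|I^*| = n$. But $|I^*| = n$ means every $O_i \subseteq L$, so $H$ must be empty; this forces $k = |\overline{H}|$ and hence $P_n = E \cup \{g'_{n-\tau}, \ldots, g'_{|\overline{H}|}\} \subseteq L$, giving $|I| = n$. The main obstacle is this boundary bookkeeping—making sure the strict inequality $w(L \cap \overline{H}) < k\mu$ is invoked correctly to rule out the tight integer case, and that the $|I^*| = n$ edge case collapses cleanly; the rest is a straightforward weight accounting.
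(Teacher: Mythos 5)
Your proof is correct and follows essentially the same route as the paper: the heart of both arguments is the identical weight-accounting chain $|I^*|\mu \leq \sum_{i\in I^*} w(O_i) \leq w(L) \leq w(\overline{L}) + k\mu \leq (\tau + k + \tfrac{1}{2})\mu$ via Claims~\ref{claim2}--\ref{claim4}, closed off by integrality, together with the observation that the Phase-3 singletons drawn from $L\setminus\overline{L}$ contribute to $I$. The only (harmless) difference is bookkeeping: you derive $|I^*|\le\tau+k$ unconditionally and dispose of the $|I^*|=n$ corner case at the end by noting $H=\emptyset$ forces $P_n\subseteq L$, whereas the paper handles that case upfront with a cumulative-weight count showing all $n$ subsets are nonempty.
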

\begin{proof}
$\ALG$ populates the first $\tau$ subsets from $\overline{L}$ and goods from the set $L \setminus \overline{L}$ are assigned as singletons in Phase 3, or allocated to $P_n$; recall $\mu \geq \bmu$ and $\overline{L} \subseteq L$. We also have, for all $i \in I \setminus \{ n \}$,  $w(P_i) \leq \mu$; in particular, for the first $\tau$ subsets via Claim 3. 

First, consider the case wherein $|I^*| = n$, i.e., $L = [m]$. Here, accounting for the cumulative weight, $w([m])$, shows that $|I| =n$ as well. Specifically, the cumulative weight is at least $n \mu$, since $w([m]) = \sum_{i=1}^n w(O_i) \geq n \mu$. Using the fact that $w(P_i) \leq \mu$ for all $i \in I \setminus \{ n \}$, we get that $\ALG$ will necessarily construct $n$ nonempty subsets. Thus, in this case, $|I| = n$.

We now address the case where $|I^*| \leq n-1$. We consider two (sub) cases, either (i) $|L \setminus \overline{L}| >  n- \tau - 1$, or (ii) $|L \setminus \overline{L}| \leq n - \tau -1$. The sorting in Step 16 of $\ALG$ ensures that the goods from $L \setminus \overline{L}$ are assigned (as singletons) before any good from $H$. Therefore under (i), $\ALG$ is, in fact, able to construct at least $n-1$ subsets from $L$ (i.e., $|I| \geq n-1$): $\tau$ from $\overline{L}$ and $n - \tau - 1 $ (in Phase 3) from $L \setminus \overline{L}$. This gives us $|I| \geq |I^*|$. 
Under (ii), all the $\ell:= | L \setminus \overline{L}|$ goods are assigned as singletons in Phase 3. Hence, $ |I| = \tau + \ell$. 
Using the previous claims, we will show that $\tau + \ell \geq |I^*|$. 
\begin{align*}
|I^*|  \mu & \leq \sum_{i \in I^*} w(O_i) \tag{since $w(O_i) \ge \mu$ for all $i \in [n]$} \\
& \le w(L) \tag{via Claim \ref{claim2}} \\ 
& \le w(\overline{L}) + \ell \cdot \mu \tag{since $w(g) \le \mu$ for all $g\in L \setminus \overline{L}$}  \\
 &\le \sum_{i =1}^\tau w(P_i) + \frac{\mu}{2} + \ell \cdot \mu \tag{via Claim \ref{claim3}} \\
 & \le \left(\tau + \ell + \frac{1}{2}\right) \mu \tag{via Claim \ref{claim4}}
\end{align*}
Since, $|I^*|$, $\tau$, and $\ell$ are integers, we get $\tau + \ell \geq |I^*|$. 
\end{proof}

\begin{proof}[{\bf Proof of Lemma~\ref{lemma:validmms}}]
$\ALG$ ensures that for each $i \in I$ we have $w(P_i) \geq \bmu/2$: for all the subsets in $[\tau] \subseteq I$, this inequality holds via Claim~\ref{claim3}. The remaining subsets, $P_i$ with $i \in I$ (constructed in Phase 3 of $\ALG$), are singletons consisting of goods (from the set $L \setminus \overline{L}$) each of weight at least $\bmu$ and, hence, the inequality holds for them as well. 
Therefore, to complete the proof it suffices to show that $|H| \geq n- |I|$. Since, this would guarantee that (in Phase 3 and 4) each of the last $n - |I|$ subsets get a good from $H$ each of value $\bmu$ (recall, $w(g) \geq \mu \geq \bmu$ for all $g \in H$). That is, every constructed subset $P_i$ satisfies $w(P_i) \geq \bmu/2$.

The definition of $I^*$ implies that for all $j \in [n] \setminus I^*$ we have $O_j \not\subseteq L$, i.e., $O_j \cap H \neq \emptyset$ for all $j \in [n] \setminus I^*$.\footnote{Without loss of generality, we can assume that the subsets $O_j$s are nonempty.} Hence, Claim~\ref{claim5} gives us $|H| \geq n - |I^*| \geq n - |I|$. This establishes the stated bound for all the computed subsets $P_i$s. 
\end{proof}
Overall, we get that $\ALG$, in polynomial time, computes a partition $\mathcal{P}$ which satisfies the  $(1-\varepsilon)/2$-approximate $\MMS$ guarantee.  Next, we show that social welfare of   a sorted allocation of $\mathcal{P}$,  is at least as much as the social welfare of the optimal solution of \textsc{FA-MMS} problem. 
\begin{lemma}
Let $\mathcal{P}$ be a partition computed by $\ALG$ and $\mathcal{A}=(A_i)_{i\in [n]}$ be a sorted allocation of $\mathcal{P}$. Then, the sequence $(w(A_i))_{i \in [n]}$ majorizes $(w(O_i))_{i \in [n]}$, where $\mathcal{O} = (O_i)_{i\in [n]}$ is an optimal solution of \textsc{FA-MMS}.
\label{lemma:optimality}
\end{lemma}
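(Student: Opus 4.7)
The plan is to establish the majorization directly from the definition. Both $(w(A_i))_{i \in [n]}$ and $(w(O_i))_{i \in [n]}$ are the bundle weights of partitions of $[m]$, so $\sum_i w(A_i) = \sum_i w(O_i) = w([m])$ and the equality condition of Definition~\ref{def:major} is automatic. It remains to prove that for every $k \in [n-1]$, $\sum_{i=1}^{k} w(A_{(i)}) \ge \sum_{i=1}^{k} w(O_{(i)})$, where $A_{(i)}$ and $O_{(i)}$ denote the $i$-th largest weight in $\mathcal{A}$ and $\mathcal{O}$ respectively. Passing to complementary sums, this is equivalent to showing that the sum of the $(n-k)$ smallest weights in $\mathcal{P}$ is at most the corresponding sum in $\mathcal{O}$. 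Feasibility of $\mathcal{O}$ under \textsc{FA-MMS} immediately gives the lower bound $\sum_{i=n-k+1}^{n} w(O_{(i)}) \ge (n-k)\mu$.

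The rest of the argument exploits the structural facts about $\mathcal{P}$ produced by $\ALG$. Claim~\ref{claim3} gives $w(P_i) \le \mu$ for every Phase 1/2 bundle $i \in [\tau]$. The Phase 3 singletons $P_{\tau+1},\ldots,P_{n-1}$ consist of single $\overline{H}$-goods placed in non-decreasing order of weight, and $P_n$ absorbs the largest remaining $\overline{H}$-goods together with the leftover $E$, so $w(P_n)$ weakly dominates every Phase 3 singleton. When $n-k \le \tau$, the $(n-k)$ smallest bundles of $\mathcal{P}$ can be selected from the Phase 1/2 bundles, their total weight is at most $(n-k)\mu$, and this matches the lower bound for $\mathcal{O}$.

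For the harder regime $n-k > \tau$, the bottom of the sorted $\mathcal{P}$ sequence necessarily contains Phase 3 singletons, and a singleton whose good lies in $H$ can individually exceed $\mu$, so the crude $(n-k)\mu$ bound breaks. The remedy is to use Claim~\ref{claim1}, which forces every $O_i$ with $i \neq 1$ to lie entirely inside $L$ or entirely inside $H$, together with Claim~\ref{claim5}, which guarantees $|I| \ge |I^*|$. I will match the $|I^*|$ bundles of $\mathcal{O}$ contained in $L$ against an equal number of $\mathcal{P}$'s Phase 1/2 bundles, and then argue that the remaining $n - |I^*|$ bundles of $\mathcal{O}$ (which either lie entirely in $H$ or equal $O_1$, and collectively cover all of $H$) have total weight dominated by the remaining bundles of $\mathcal{P}$ (the unmatched Phase 1/2 bundles together with the Phase 3 singletons and $P_n$, which collectively cover all of $\overline{H}$ and $E$). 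Using this matching alongside the non-decreasing ordering of Phase 3, one can ensure that each tail Phase 3 singleton is individually dominated by the weight that any feasible $\mathcal{O}$-bundle containing the same (or a comparably small) $\overline{H}$-good must carry.

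The main obstacle is making this pairing work at each rank $k$ simultaneously, because an $\mathcal{O}$-bundle entirely inside $H$ may aggregate several $H$-goods and hence outweigh any single Phase 3 singleton of $\mathcal{P}$. The resolution relies on the fact that $\ALG$ deliberately packs the heaviest $\overline{H}$-goods into $P_n$: any excess weight held by a multi-good $\mathcal{O}$-bundle in $H$ is matched on the $\mathcal{P}$ side by $w(P_n)$ sitting at the top of the sorted sequence rather than in the tail. Thus, the desired inequality $\sum_{i=n-k+1}^{n} w(A_{(i)}) \le \sum_{i=n-k+1}^{n} w(O_{(i)})$ holds \emph{cumulatively} rather than bundle-by-bundle, which is precisely what Definition~\ref{def:major} demands, completing the proof.
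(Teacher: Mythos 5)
Your skeleton---use $\sum_i w(A_i) = \sum_i w(O_i)$ and then show that every tail sum of the sorted sequence $(w(A_i))_i$ is dominated by the corresponding tail sum of $(w(O_i))_i$---is exactly the reduction the paper makes, and your treatment of the regime $n-k \le \tau$ (the $n-k$ smallest bundles of $\mathcal{P}$ all have weight at most $\mu$ by Claim~\ref{claim3}, versus the $(n-k)\mu$ lower bound from feasibility of $\mathcal{O}$) is sound. The gap is in the regime you yourself flag as the hard one. There the proof is replaced by a narrative: you announce a matching between the $L$-bundles of $\mathcal{O}$ and some bundles of $\mathcal{P}$, assert that ``one can ensure'' each tail Phase~3 singleton is dominated, and claim that any excess weight of a multi-good $H$-bundle of $\mathcal{O}$ is ``absorbed by $P_n$.'' None of this is derived, and the absorption claim in particular does not control the tail sum at an intermediate rank $k$: a heavy $\mathcal{O}$-bundle sits near the top of $\mathcal{O}$'s sorted order and hence contributes nothing to $\mathcal{O}$'s tail at that rank, while $w(P_n)$ likewise sits at the top of $\mathcal{A}$'s order and is equally absent from $\mathcal{A}$'s tail, so invoking it cannot produce the inequality you need.

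The missing idea is a structural fact about the optimum that the paper extracts from the proof of Claim~\ref{claim1}: for $i \neq 1$, if $O_i$ meets $H$ then $O_i$ is a \emph{singleton} $\{g'\}$ with $g' \in H$ (otherwise moving $O_i \setminus \{g'\}$ to agent~1 preserves feasibility and improves welfare). Hence the obstacle you identify as the main one---an $\mathcal{O}$-bundle aggregating several $H$-goods and outweighing any Phase~3 singleton---cannot occur at an optimal solution. Combining this with Claim~\ref{claim5} (so the number $h$ of $H$-singletons of $\mathcal{A}$ outside $A_1$ is at most the number $h^*$ of $H$-bundles of $\mathcal{O}$ outside $O_1$) and with the fact that Phase~3's sorting hands out the $h$ \emph{lightest} goods of $H$ as singletons, one gets the bundle-by-bundle bound $w(A_j) \le w(O_j)$ for every $j \in \{2,\ldots,n\}$, which is the paper's Claim~\ref{claim6}: the $H$-singletons of $\mathcal{A}$ occupy positions $2,\ldots,h+1$ in the sorted order and are each dominated by a distinct $H$-bundle of $\mathcal{O}$, while every other $A_j$ with $j \ge 2$ lies in $L$ and has weight at most $\mu \le w(O_j)$. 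The cumulative tail-sum inequality is then immediate, so the ``cumulative rather than bundle-by-bundle'' workaround you reach for is both unproven in your sketch and unnecessary.
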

\noindent The following claims are  used in establishing Lemma~\ref{lemma:optimality}.
\begin{claim}
\label{claim:sort}
Let $\mathcal{P} =\{P_1, \ldots, P_n \}$ be a partition returned by $\ALG$. Then, $w(P_n) \geq w(P_i)$ for all $i \in [n]$. 
\end{claim}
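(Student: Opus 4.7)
The plan is a case analysis based on which phase of $\ALG$ is the last one executed before $\mathcal{P}$ is returned. In every case I will bound $w(P_i)$ from above for $i < n$ and $w(P_n)$ from below. First, for every $i \in [\tau]$ the subset $P_i$ is drawn entirely from $\overline{L}$: each Phase-1 singleton has weight strictly less than $\overline{\mu}$ by the definition of $\overline{L}$, and each Phase-2 subset has weight strictly less than $\overline{\mu}$ because the stopping rule only adds one final tiny good (of weight $< \overline{\mu}/2$) to a bundle of weight $< \overline{\mu}/2$. Hence $w(P_i) < \overline{\mu}$ whenever $i \in [\tau]$, regardless of which phase terminates $\ALG$.

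In the principal case, where $\ALG$ reaches Phase~4 (so $\tau < n$), Phase~3 produces singletons $P_{\tau+1} = \{g'_1\}, \ldots, P_{n-1} = \{g'_{n-\tau-1}\}$ drawn from $\overline{H}$ in non-decreasing order of $w$, and Phase~4 sets $P_n \supseteq \{g'_{n-\tau}\}$ (this is well-defined, as $|\overline{H}| \geq n-\tau$ is already guaranteed during the proof of Lemma~\ref{lemma:validmms}). The non-decreasing sort yields $w(g'_{n-\tau}) \geq w(g'_i)$ for every $i \leq n-\tau-1$, so $w(P_n) \geq w(g'_{n-\tau}) \geq w(P_{\tau+i})$. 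Combining with $w(g'_{n-\tau}) \geq \overline{\mu} > w(P_i)$ for $i \in [\tau]$ gives $w(P_n) \geq w(P_i)$ for all $i \in [n]$.

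For the two early-termination cases (that is, $\tau' \geq n$ after Phase~1, or $\tau \geq n$ after Phase~2), each $P_i$ with $i < n$ is again a subset drawn entirely from $\overline{L}$, so $w(P_i) < \overline{\mu}$. Using the fact that an $\MMS$-partition exists by the very definition of $\mu$, we have $w([m]) \geq n\mu \geq n\overline{\mu}$, and hence
\[
w(P_n) \;=\; w([m]) \;-\; \sum_{i=1}^{n-1} w(P_i) \;>\; n\overline{\mu} \;-\; (n-1)\overline{\mu} \;=\; \overline{\mu} \;>\; w(P_i),
\]
which settles the claim in these cases.

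The only subtle point is the Phase-4 case: one must exploit the non-decreasing sort of $\overline{H}$ in Phase~3 so that the smallest high-valued good landing in $P_n$, namely $g'_{n-\tau}$, still dominates every singleton $P_{\tau+i}$ formed in Phase~3. Everything else follows directly from weight bounds already established in Claim~\ref{claim3} and in the proof of Lemma~\ref{lemma:validmms}.
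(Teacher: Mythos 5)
Your proof is correct and follows essentially the same route as the paper's: the non-decreasing sort of $\overline{H}$ in Phase 3 forces $P_n$ to receive the heaviest high-valued good(s), the bundles built in Phases 1--2 are bounded above (you use the sharper bound $w(P_i) < \overline{\mu}$, the paper the weaker $w(P_i) \leq \mu$ from Claim~\ref{claim3}), and a cumulative-weight argument via $w([m]) \geq n\mu$ disposes of the case with no high-valued goods. The only difference is organizational---you split on the terminating phase while the paper splits on whether $H$ is empty---but both arguments rest on the same three facts.
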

\begin{proof}
If $H \neq \emptyset$, then the sorting step in Phase 3 (Step 16) guarantees that $P_n$ receives the highest valued good(s) in $H$. For every other subset $P_i$ in $\mathcal{P}$, either $P_i$ is a singleton (with a lower-valued good from $H \cup (L \setminus \overline{L})$) or $P_i \subseteq \overline{L}$ (i.e., $i \in [\tau]$), in which case $w(P_i) \leq \mu$ (Claim~\ref{claim3}). Therefore, if $H$ is nonempty, the stated claim follows.
On the other hand, if $H = \emptyset$, i.e., $L = [m]$, $\ALG$ will allocate goods so that $w(P_i) \leq \mu$ for all $i \in [n-1]$. Given that the cumulative weight is at least $n \mu$, again we have $w(P_n) \geq \mu \geq w(P_i)$ for all $i$. 
\end{proof}

\begin{claim}
\label{claim6}
$w(A_j) \le w(O_j)$ for all $j \in \{2, 3, \ldots, n \}.$
\end{claim}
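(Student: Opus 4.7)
The plan is to prove $w(A_j) \le w(O_j)$ for $j \ge 2$ after sorting $\mathcal{O}$ in non-increasing order of weight. This sorting is without loss of generality, since agents are indexed so that $v_1 \ge v_2 \ge \ldots \ge v_n$ and $\mathcal{O}$ is welfare-maximizing, so a standard rearrangement argument applies. I would then split on whether $w(A_j) \le \mu$ or $w(A_j) > \mu$. The first case is immediate from MMS-feasibility of $\mathcal{O}$: $w(O_j) \ge \mu \ge w(A_j)$. All the real work sits in the second case.

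When $w(A_j) > \mu$, Claim~\ref{claim3} (weights in $[\tau]$ are $\le \mu$) together with the fact that Phase 3 singletons drawn from $\overline{H}\setminus H = L\setminus\overline{L}$ have weight strictly below $\mu$ force $A_j$ to be a Phase 3 singleton $\{g^*\}$ with $g^* \in H$ and $w(g^*) > \mu$. By sortedness of $\mathcal{A}$ the same must hold for $A_2,\ldots,A_j$, and since Phase 3 assigns singletons in non-decreasing order of weight, these $j-1$ bundles are exactly the $j-1$ largest Phase 3 $H$-singletons; the remaining $H$-goods, say $q = |H| - (n-1-|I|)$ of them, are packed into $P_n$, each with weight at least $w(g^*)$. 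Hence $\mathcal{P}$ contains $q + j - 1$ goods of $H$ of weight $\ge w(g^*)$. I next want to transfer this count to $\mathcal{O}$: by Claim~\ref{claim1} and the obvious extension of its exchange argument (if some $O_i \subseteq H$ with $i\ne 1$ has $|O_i|\ge 2$, moving a good to agent $1$ preserves MMS-feasibility since a single $H$-good already has weight $\ge \mu$ and weakly increases welfare), I may assume every $O_i \subseteq H$ with $i\ne 1$ is a singleton. Let $c_1$ be the number of the $q+j-1$ heavy $H$-goods that sit inside $O_1$ and $c_{>1} = q+j-1 - c_1$ the number located elsewhere, so that the $c_{>1}$ goods sit each alone in some singleton $O_i$ with $w(O_i) \ge w(g^*)$.

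The main obstacle is establishing $c_1 \le q$. Counting the $H$-singletons of $\mathcal{O}$ with $i\ne 1$ gives $|H| - |O_1 \cap H|$, which yields $|I^*| \ge n - 1 - (|H| - |O_1 \cap H|)$; invoking Claim~\ref{claim5} ($|I^*|\le |I|$) and rearranging produces $|O_1 \cap H| \le |H| - n + 1 + |I| = q$, so $c_1 \le |O_1 \cap H| \le q$ and $c_{>1} \ge j - 1$. To conclude, since $c_{>1} \ge j - 1 \ge 1$ at least one singleton $O_i$ ($i \ge 2$) has weight $\ge w(g^*)$, and then by the sorted order $w(O_1) \ge w(g^*)$ as well; together these account for at least $1 + (j-1) = j$ bundles of $\mathcal{O}$ with weight $\ge w(g^*)$, which gives $w(O_j) \ge w(g^*) = w(A_j)$. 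The critical step is the counting inequality $|O_1 \cap H| \le q$, which is exactly where Claim~\ref{claim5} is invoked; everything else is bookkeeping about the Phase 3 ordering and how $\textsc{Alg}$ places the $H$-goods into $P_n$.
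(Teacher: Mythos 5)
Your proof is correct and follows essentially the same route as the paper's: your case split on whether $w(A_j)$ exceeds $\mu$ mirrors the paper's split on whether $A_j$ lies in $L$, and your key counting inequality $|O_1 \cap H| \le q$ is exactly the paper's bound $h \le h^*$ (the number of Phase-3 $H$-singletons in $\mathcal{P}$ is at most the number of $H$-singletons of $\mathcal{O}$ outside $O_1$), derived from Claim~\ref{claim5} in the same way. The only difference is presentational---you count goods of weight at least $w(g^*)$ across the two allocations, whereas the paper compares the sorted sequences of $H$-singletons termwise---but both arguments rest on the same facts about the Phase-3 sorting, Claim~\ref{claim1}, and Claim~\ref{claim5}.
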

\begin{proof}
Let $F := \{ i \neq 1 \mid A_i \subset L \}$ denote the set of agents, besides agent 1, who get only the low valued goods. Note that the bundles in $F$ are obtained by allocating subsets from $\{P_i\}_{i \in I}\setminus \{P_n\}$; recall that $P_n$ is allocated to agent 1 (Claim ~\ref{claim:sort}). 
We prove this claim by considering indices in $F$ and in $\{2, 3, \ldots, n\} \setminus F$ separately.  

\noindent
(i) If index $j \in F $, then $A_j \in \{P_i\}_{i \in I}\setminus \{P_n\}$. Subsets in this collection receive goods of weight at most $\mu$ during the execution of $\ALG$. Hence, $w(A_j) \leq \mu \leq w(O_j)$. 
 
\noindent
(ii) If index $ j \in \{2, 3, \ldots, n\} \setminus F$, then $A_j \notin \{P_i\}_{i \in I}\setminus \{P_n\}$. Therefore, $A_j$ corresponds to a partition which must have been populated after Phase 1 and 2 of $\ALG$. Here, $j \neq 1$ and, hence, $A_j \neq A_1 = P_n$. This, in turn, implies that the subset corresponding to $A_j$ was formed in Phase 3 and it consists of a single good from the set $H$, i.e., $w(A_j) \geq \mu$. In other words, in this case we must have $|H| \neq \emptyset$. Write $h:=|\{ f \neq 1 \mid A_f \cap H \neq \emptyset \}|$ to denote the number of bundles in $\mathcal{A}$, besides $A_1$, which contain a good from $H$. These $h$ bundles are singletons (see Phase 3) and satisfy $w(A_f) \geq \mu$. Furthermore, since the subsets in $\{P_i\}_{i \in I}\setminus \{P_n\}$ are of weight at most $\mu$, without loss of generality we have that in the sorted allocation $A_2, A_3, \ldots, A_{h+1}$ are bundles formed from $H$.  Therefore, the index $j$ must satisfy $2 \leq j \leq h+1$. 

Analogous to $h$, we consider $h^*:=|\{ f \neq 1 \mid O_f \cap H \neq \emptyset \}|$. In particular, write $O_{(1)}, O_{(2)}, \ldots O_{(h^*)}$ to denote the $h^*$ optimal bundles (besides $O_1$) which contain a good from $H$. Each $O_{(f)}$ is a singleton (Claim~\ref{claim1}) and we index them such that $w(O_{(1)}) \geq w(O_{(2)}) \geq \ldots \geq w(O_{(h^*)})$. Using Claim~\ref{claim5} and the fact that $\ALG$ allocates goods from $H$ as singletons (besides the ones assigned to $A_1=P_n$) we get $h \leq h^*$.  Furthermore, the sorting in Step 17 ensures that the lowest (in terms of weight) $h$ goods from $H$ are allocated as singletons in Phase 3. Therefore, $w(O_{(i)}) \geq w(A_{i+1})$ for all $ i \in \{1,2, \ldots, h\}$. Since $w(O_{i+1}) \geq w(O_{(i)})$ for all $ i \in \{1,2, \ldots, h\}$ (the optimal bundle is necessarily sorted), we get the desired inequality $w(O_j) \geq w(A_j)$ for the index $j$. 
\end{proof}

\begin{proof}[{\bf Proof of Lemma~\ref{lemma:optimality} }]
The cumulative weight of the goods remains same irrespective of the allocation, $\sum_{i = 1}^{n}w(A_i) = \sum_{i = 1}^{n} w(O_i)$. Hence, 
$\sum_{i = 1}^{k}w(A_i) = \sum_{i = 1}^{k} w(O_i) + \sum_{i = k+1}^{n}w(O_i) - \sum_{i = k+1}^{n}w(A_i)$ for all $k \in \{1,2,\ldots n-1\}$.
Applying Claim~\ref{claim6}, $\sum_{i = k+1}^{n}w(O_i) - \sum_{i = k+1}^{n}w(A_i) \ge 0$. Therefore, the stated inequality holds: 
$\sum_{i = 1}^{k}w(A_i) \ge \sum_{i = 1}^{k} w(O_i)$ for all $k \in \{1,2,\ldots n-1\}$.
\end{proof}

The analysis above focused on the case in which $\textsc{Alg}$ returns the partition $\mathcal{P}$ after Phase 4. In the complementary cases (wherein \textsc{Alg} terminates after Phase 1 or Phase 2) the fairness guarantee (Lemma ~\ref{lemma:validmms}) is directly achieved. In addition, in these cases we have the following bounds: $\frac{\overline{\mu}}{2} \leq w(P_i) \leq \overline{\mu} \leq \mu$  for all $i < n$  and $w(P_n) \geq  \mu + \sum_{i=1}^{n -1} ( \mu -w(P_i)) \geq \mu$ (Claim ~\ref{claim:sort}). Using these facts, one can show, as before, that the majorization guarantee  (Lemma~\ref{lemma:optimality}) continues to hold. Therefore, even if $\textsc{Alg}$ terminates before Phase 4, we have {\rm SW}$(\mathcal{A}) \geq$ {\rm SW}$(\mathcal{O})$.

Hence, we get that the sorted allocation $\mathcal{A}$ assigned to the agents satisfies the required fairness and welfare guarantees. Overall, using this sorted (monotone) allocation rule we get a DSIC mechanism which establishes Theorem~\ref{thm:mms}. 
 
\section{ Proof of Theorem~\ref{thm:nsw}}
\label{sec:SS-NSW}

This section describes a DSIC mechanism for \textsc{FA-NSW}. $\NSW$ is a scale-free objective; in particular, the following equality holds for any $v_1, \ldots, v_n \in \mathbb{R}_+$, $\argmax_{(Q_1, \ldots, Q_n) \in \Pi_n([m])} \left( \prod_{i=1}^n v_i C(Q_i) \right)^{1/n} = \argmax_{(Q_1, \ldots, Q_n) \in \Pi_n([m])} \left( \prod_{i=1}^n  C(Q_i) \right)^{1/n}$.


Hence, an allocation that (approximately) maximizes the Nash social welfare under the additive function $w(\cdot)$ is a (approximate) maximizer of \textsc{FA-NSW}. Constant-factor approximation algorithms are known for maximizing Nash social welfare under additive valuations~\cite{COL15, AGS+17nash, COL17, BAR17}. In fact, Nguyen and Rothe~\cite{nguyen2014minimizing} provide a PTAS for this problem when the valuations of the agents are additive and identical. Hence, in a bid oblivious manner and for any fixed $\varepsilon > 0$, we can find a partition, $\mathcal{P} = \{P_1, \ldots, P_n\}$, in polynomial time which satisfies $ \left( \prod_{i=1}^{n}  w(P_i) \right)^{1/n} \geq \frac{1}{1 + \varepsilon} \max_{(Q_1, \ldots, Q_n) \in \Pi_n([m])} \ \ \left( \prod_{i=1}^{n}  w(Q_i) \right)^{1/n}$. 

A sorted allocation of $\mathcal{P}$ continues to satisfy this approximation guarantee. Hence, the monotone allocation rule described in Section~\ref{sec:mdtoad} leads to a DSIC mechanism which satisfies Theorem~\ref{thm:nsw} (restated next).

\LNSW*

The PTAS in~\cite{nguyen2014minimizing} is not combinatorial--it uses the integer programming algorithm of Lenstra~\cite{lenstra1983integer}. We can, however, obtain a combinatorial (and extremely efficient) mechanism for \textsc{FA-NSW} using the $1.061$-approximation algorithm developed in~\cite{BAR18}. 

Also, note that the Nash social welfare approximation guarantee does not require a sorted allocation of $\mathcal{P}$. This requirement is used to only ensure monotonicity of the allocation rule. In fact, for $\NSW$ under additive valuations, we can achieve a $1.45$ approximation in general single-parameter environments. Specifically, if $w_i:2^{[m]} \mapsto \mathbb{R}_+$ is an additive public value summarization function of agent $i$ and $v_i \in \mathbb{R}_+$ is her valuation parameter, then, by definition, the value that agent $i$ has  a subset of goods $S \subseteq [m]$ is $v_i w_i(S)$.\footnote{In single parameter environment $w_i$ is equal to $w$ for all agents $i \in [n]$.} Hence, using the algorithm of Barman et al.~\cite{BAR17},  we can efficiently find a $1.45$-approximation to a Nash optimal under $w_i$s. A sorted allocation of such an approximate solution leads to the stated guarantee for general single-parameter environments.  




\section{Conclusion and Future Work}

This paper studies fair allocation of indivisible goods among strategic agents. In particular, we develop truthful and efficient mechanisms which are (approximately) fair in terms of $\EFone$, $\MMS$ and optimizing $\NSW$. Under $\EFone$ and $\MMS$, we focus on the fundamental objective of maximizing social welfare and obtain constant-factor approximation guarantees (bi-criteria in the $\MMS$ case). 

Going forward, it would quite interesting to address revenue maximization. Note that in the standard Bayesian framework the \emph{virtual valuations} of the agents can be negative. Hence, it is not clear if bid oblivious algorithms exist when the objective is to maximize  expected revenue: if the virtual valuations of all the agents are negative then an empty allocation is both fair and optimal. On the other hand, if all the virtual valuations are positive then we must assign the entire set of goods to obtain high revenue. This separation highlights the fact that, when maximizing revenue, it is not clear if one can decide on the ``right'' subset of goods to allocate without looking at the bids. 

Improving the approximation factors and considering non-additive public value summarization functions (e.g., submodular) also remain interesting directions for future work. 




\bibliographystyle{alpha}
\bibliography{references.bib}

\newcommand{\etalchar}[1]{$^{#1}$}
\begin{thebibliography}{AOGSS17}

\bibitem[ABM16]{AMA16}
Georgios Amanatidis, Georgios Birmpas, and Evangelos Markakis.
\newblock On truthful mechanisms for maximin share allocations.
\newblock In {\em Proceedings of the Twenty-Fifth International Joint
  Conference on Artificial Intelligence}, IJCAI'16, pages 31--37. AAAI Press,
  2016.

\bibitem[AMNS15]{AMN+15approximation}
Georgios Amanatidis, Evangelos Markakis, Afshin Nikzad, and Amin Saberi.
\newblock {Approximation Algorithms for Computing Maximin Share Allocations}.
\newblock In {\em International Colloquium on Automata, Languages, and
  Programming}, ICALP'15, pages 39--51. Springer, 2015.

\bibitem[AOGSS17]{AGS+17nash}
Nima Anari, Shayan Oveis~Gharan, Amin Saberi, and Mohit Singh.
\newblock {Nash Social Welfare, Matrix Permanent, and Stable Polynomials}.
\newblock In {\em Innovations in Theoretical Computer Science}, ITCS'17. ACM,
  2017.

\bibitem[AT01]{ARC01}
Aaron Archer and {\'E}va Tardos.
\newblock Truthful mechanisms for one-parameter agents.
\newblock In {\em Proceedings of the 42nd IEEE Symposium on Foundations of
  Computer Science}, FOCS'01, pages 482--491. IEEE, 2001.

\bibitem[BCH{\etalchar{+}}17]{BEI17}
Xiaohui Bei, Ning Chen, Guangda Huzhang, Biaoshuai Tao, and Jiajun Wu.
\newblock Cake cutting: Envy and truth.
\newblock In {\em International Joint Conference on Artificial Intelligence},
  IJCAI'17, pages 3625--3631. AAAI Press, 2017.

\bibitem[BGM17]{branzei2017nash}
Simina Br{\^a}nzei, Vasilis Gkatzelis, and Ruta Mehta.
\newblock Nash social welfare approximation for strategic agents.
\newblock In {\em Proceedings of the 2017 ACM Conference on Economics and
  Computation}, EC'17, pages 611--628. ACM, 2017.

\bibitem[BK17]{BAR17a}
Siddharth Barman and Sanath~Kumar Krishnamurthy.
\newblock Approximation algorithms for maximin fair division.
\newblock In {\em Proceedings of the 2017 ACM Conference on Economics and
  Computation}, EC '17, pages 647--664. ACM, 2017.

\bibitem[BKV18a]{BAR17}
Siddharth Barman, Sanath~Kumar Krishnamurthy, and Rohit Vaish.
\newblock Finding fair and efficient allocations.
\newblock In {\em Proceedings of the 2018 ACM Conference on Economics and
  Computation}, EC'18, pages 557--574. ACM, 2018.

\bibitem[BKV18b]{BAR18}
Siddharth Barman, Sanath~Kumar Krishnamurthy, and Rohit Vaish.
\newblock Greedy algorithms for maximizing nash social welfare.
\newblock In {\em Proceedings of the 17th International Conference on
  Autonomous Agents and MultiAgent Systems}, AAMAS'18, pages 7--13. ACM, 2018.

\bibitem[BT95]{BRA95}
Steven~J. Brams and Alan~D. Taylor.
\newblock An envy-free cake division protocol.
\newblock {\em The American Mathematical Monthly}, pages 9--18, 1995.

\bibitem[Bud11]{B11combinatorial}
Eric Budish.
\newblock {The Combinatorial Assignment Problem: Approximate Competitive
  Equilibrium from Equal Incomes}.
\newblock {\em Journal of Political Economy}, 119(6):1061--1103, 2011.

\bibitem[CAS16]{CHR16}
Evgenia Christoforou, Antonio~Fern{\'a}ndez Anta, and Agust{\'\i}n Santos.
\newblock A mechanism for fair distribution of resources without payments.
\newblock {\em PLOS ONE}, 11:1--20, 05 2016.

\bibitem[CDG{\etalchar{+}}17]{COL17}
Richard Cole, Nikhil Devanur, Vasilis Gkatzelis, Kamal Jain, Tung Mai, Vijay~V.
  Vazirani, and Sadra Yazdanbod.
\newblock Convex program duality, fisher markets, and nash social welfare.
\newblock In {\em Proceedings of the 2017 ACM Conference on Economics and
  Computation}, EC '17, pages 459--460. ACM, 2017.

\bibitem[CFF{\etalchar{+}}11]{COH11}
Edith Cohen, Michael Feldman, Amos Fiat, Haim Kaplan, and Svetlana Olonetsky.
\newblock Truth, envy, and truthful market clearing bundle pricing.
\newblock In {\em International Workshop on Internet and Network Economics},
  pages 97--108. Springer, 2011.

\bibitem[CG15]{COL15}
Richard Cole and Vasilis Gkatzelis.
\newblock Approximating the nash social welfare with indivisible items.
\newblock In {\em Proceedings of the Forty-seventh Annual ACM Symposium on
  Theory of Computing}, STOC '15, pages 371--380. ACM, 2015.

\bibitem[CGG13]{COL13M}
Richard Cole, Vasilis Gkatzelis, and Gagan Goel.
\newblock Mechanism design for fair division: Allocating divisible items
  without payments.
\newblock In {\em Proceedings of the Fourteenth ACM Conference on Electronic
  Commerce}, EC '13, pages 251--268. ACM, 2013.

\bibitem[CKM{\etalchar{+}}16]{CAR16}
Ioannis Caragiannis, David Kurokawa, Herv{\'e} Moulin, Ariel~D. Procaccia,
  Nisarg Shah, and Junxing Wang.
\newblock The unreasonable fairness of maximum nash welfare.
\newblock In {\em Proceedings of the 2016 ACM Conference on Economics and
  Computation}, EC '16, pages 305--322. ACM, 2016.

\bibitem[CLPP13]{CHE13}
Yiling Chen, John~K Lai, David~C Parkes, and Ariel~D Procaccia.
\newblock Truth, justice, and cake cutting.
\newblock {\em Games and Economic Behavior}, 77:284--297, 2013.

\bibitem[EOS07]{EDE05}
Benjamin Edelman, Michael Ostrovsky, and Michael Schwarz.
\newblock Internet advertising and the generalized second-price auction:
  Selling billions of dollars worth of keywords.
\newblock {\em American Economic Review}, 97(1):242--259, 2007.

\bibitem[FGL15]{FEL15}
Michael Feldman, Nick Gravin, and Brendan Lucier.
\newblock On welfare approximation and stable pricing.
\newblock {\em arXiv preprint arXiv:1511.02399}, 2015.

\bibitem[GH03]{GOL03}
Andrew~V. Goldberg and Jason~D. Hartline.
\newblock Envy-free auctions for digital goods.
\newblock In {\em Proceedings of the 4th ACM Conference on Electronic
  Commerce}, EC '03, pages 29--35. ACM, 2003.

\bibitem[GHK{\etalchar{+}}05]{GUR05}
Venkatesan Guruswami, Jason~D Hartline, Anna~R Karlin, David Kempe, Claire
  Kenyon, and Frank McSherry.
\newblock On profit-maximizing envy-free pricing.
\newblock In {\em Proceedings of the sixteenth annual ACM-SIAM symposium on
  Discrete algorithms}, pages 1164--1173. ACM, 2005.

\bibitem[GHS{\etalchar{+}}18]{GHO17}
Mohammad Ghodsi, MohammadTaghi Hajiaghayi, Masoud Seddighin, Saeed Seddighin,
  and Hadi Yami.
\newblock Fair allocation of indivisible goods: Improvement and generalization.
\newblock In {\em Proceedings of the 2018 ACM Conference on Economics and
  Computation}, EC'18, pages 539--556. ACM, 2018.

\bibitem[GKW10]{GOE10}
Gagan Goel, Chinmay Karande, and Lei Wang.
\newblock Single-parameter combinatorial auctions with partially public
  valuations.
\newblock In {\em International Symposium on Algorithmic Game Theory}, pages
  234--245. Springer, 2010.

\bibitem[GP15]{GP15spliddit}
Jonathan Goldman and Ariel~D Procaccia.
\newblock {Spliddit: Unleashing Fair Division Algorithms}.
\newblock {\em ACM SIGecom Exchanges}, 13(2):41--46, 2015.

\bibitem[Gri18]{IND18}
Andrew Griffin.
\newblock Google hit with huge, record-breaking fine over android feature.
\newblock
  {``https://www.independent.co.uk/life-style/gadgets-and-tech/news/google-android-fine-latest-billion-eu-european-commission-browser-chrome-web-a8452481.html/''},
  July 2018.

\bibitem[Has96]{HAS96}
Johan Hastad.
\newblock Clique is hard to approximate within $n^{(1-\epsilon)}$.
\newblock In {\em Acta Mathematica}, pages 627--636, 1996.

\bibitem[HY11]{HAR11}
Jason Hartline and Qiqi Yan.
\newblock Envy, truth, and profit.
\newblock In {\em Proceedings of the 12th ACM Conference on Electronic
  Commerce}, EC '11, pages 243--252. ACM, 2011.

\bibitem[Kel97]{KEL97}
Frank Kelly.
\newblock Charging and rate control for elastic traffic.
\newblock {\em Transactions on Emerging Telecommunications Technologies},
  8(1):33--37, 1997.

\bibitem[KN79]{KN79nash}
Mamoru Kaneko and Kenjiro Nakamura.
\newblock {The Nash Social Welfare Function}.
\newblock {\em Econometrica: Journal of the Econometric Society}, pages
  423--435, 1979.

\bibitem[KPW16]{KPW16can}
David Kurokawa, Ariel~D Procaccia, and Junxing Wang.
\newblock {When can the Maximin Share Guarantee be Guaranteed?}
\newblock In {\em Proceedings of the Thirtieth AAAI Conference on Artificial
  Intelligence}, AAAI'16, pages 523--529, 2016.

\bibitem[Lee17]{Lee17APX}
Euiwoong Lee.
\newblock {APX-hardness of Maximizing Nash Social Welfare with Indivisible
  Items}.
\newblock {\em Information Processing Letters}, 122:17--20, 2017.

\bibitem[LJ83]{lenstra1983integer}
Hendrik~W Lenstra~Jr.
\newblock Integer programming with a fixed number of variables.
\newblock {\em Mathematics of operations research}, 1983.

\bibitem[LMMS04]{LMM+04approximately}
Richard~J Lipton, Evangelos Markakis, Elchanan Mossel, and Amin Saberi.
\newblock {On Approximately Fair Allocations of Indivisible Goods}.
\newblock In {\em Proceedings of the 5th ACM conference on Electronic
  Commerce}, EC'04, pages 125--131, 2004.

\bibitem[LR16]{LAN16}
J{\'e}r{\^o}me Lang and J{\"o}rg Rothe.
\newblock Fair division of indivisible goods.
\newblock In {\em Economics and Computation}, Springer, pages 493--550.
  Springer, 2016.

\bibitem[Met15]{metzcade2015}
Cade Metz.
\newblock Facebook doesn't make as much money as it could- on purpose.
\newblock
  \url{"https://www.wired.com/2015/09/facebook-doesnt-make-much-money-couldon-purpose/"},
  Sept 2015.

\bibitem[MN12]{MAY12}
Avishay Maya and Noam Nisan.
\newblock Incentive compatible two player cake cutting.
\newblock In {\em International Workshop on Internet and Network Economics},
  pages 170--183. Springer, 2012.

\bibitem[MOA11]{MAR11}
Albert Marshall, Ingram Olkin, and Barry Arnold.
\newblock {\em Inequalities: Theory of Majorization and Its Applications}.
\newblock Springer-Verlag, 2011.

\bibitem[Mou04]{M04fair}
Herv{\'e} Moulin.
\newblock {\em {Fair Division and Collective Welfare}}.
\newblock MIT Press, 2004.

\bibitem[Mye81]{MYE81}
Roger~B. Myerson.
\newblock Optimal auction design.
\newblock {\em Math. Oper. Res.}, 6(1):58--73, February 1981.

\bibitem[Nas50]{N50bargaining}
John~F Nash.
\newblock {The Bargaining Problem}.
\newblock {\em Econometrica: Journal of the Econometric Society}, pages
  155--162, 1950.

\bibitem[NR14]{nguyen2014minimizing}
Trung~Thanh Nguyen and J{\"o}rg Rothe.
\newblock Minimizing envy and maximizing average nash social welfare in the
  allocation of indivisible goods.
\newblock {\em Discrete Applied Mathematics}, 2014.

\bibitem[NV13]{nguyen2013allocation}
Thanh Nguyen and Rakesh Vohra.
\newblock The allocation of indivisible objects via rounding.
\newblock Technical report, Working Paper, 2013.

\bibitem[OSB10]{OSB10finding}
Abraham Othman, Tuomas Sandholm, and Eric Budish.
\newblock {Finding Approximate Competitive Equilibria: Efficient and Fair
  Course Allocation}.
\newblock In {\em Proceedings of the 9th International Conference on Autonomous
  Agents and Multiagent Systems}, AAMAS'10, pages 873--880, 2010.

\bibitem[PR18]{PLA18}
Benjamin Plaut and Tim Roughgarden.
\newblock Almost envy-freeness with general valuations.
\newblock In {\em Proceedings of the Twenty-Ninth Annual ACM-SIAM Symposium on
  Discrete Algorithms}, SODA '18, pages 2584--2603, 2018.

\bibitem[PW14]{PRO14}
Ariel~D. Procaccia and Junxing Wang.
\newblock Fair enough: Guaranteeing approximate maximin shares.
\newblock In {\em Proceedings of the Fifteenth ACM Conference on Economics and
  Computation}, EC '14, pages 675--692. ACM, 2014.

\bibitem[Rot11]{ROT11}
Julio~J Rotemberg.
\newblock Fair pricing.
\newblock {\em Journal of the European Economic Association}, 9(5):952--981,
  2011.

\bibitem[Rot17]{Roth17}
Aaron Roth.
\newblock Lecture notes in algorithmic game theory.
\newblock \url
  {http://www.cis.upenn.edu/~aaroth/courses/slides/agt17/lect15.pdf}, March
  2017.

\bibitem[Str80]{S80cut}
Walter Stromquist.
\newblock {How to Cut a Cake Fairly}.
\newblock {\em The American Mathematical Monthly}, 87(8):640--644, 1980.

\bibitem[TZ15]{TAN15}
Bo~Tang and Jinshan Zhang.
\newblock Envy-free sponsored search auctions with budgets.
\newblock In {\em Proceedings of the Twenty-Fourth International Joint
  Conference on Artificial Intelligence}, IJCAI'16, pages 653--659, 2015.

\bibitem[Var74]{V74equity}
Hal~R Varian.
\newblock {Equity, Envy, and Efficiency}.
\newblock {\em Journal of Economic Theory}, 9(1):63--91, 1974.

\bibitem[Var07]{VAR07}
Hal~R Varian.
\newblock Position auctions.
\newblock {\em International Journal of Industrial Organization},
  25(6):1163--1178, 2007.

\bibitem[Woe97]{WOE97}
Gerhard~J. Woeginger.
\newblock A polynomial-time approximation scheme for maximizing the minimum
  machine completion time.
\newblock {\em Operations Research Letters}, 20(4):149 -- 154, 1997.

\end{thebibliography}

\appendix

\section{EF1 supplements}

\label{sec:EF1Supp}

\subsection{Hardness of \textsc{FA-EF1}}
\label{ssec:SSEF1Hardness}
This section shows that the decision version of \textsc{FA-EF1}---i.e., the problem of finding a social welfare maximizing EF1 allocation---is {\rm NP}-hard by reducing the partition problem to it. The decision version of \textsc{FA-EF1} and the partition problem are as follows:

Decision version of $\textsc{FA-EF1}$:
Given valuation parameters $v_1, \ldots, v_n$, an additive function $w:2^{[m]} \mapsto \mathbb{R}_+$ over $m$ goods along with a threshold $\tau$, does there exist an $\EFone$ (under $w(\cdot)$) allocation $S= (S_i)_{i=1}^n$ that satisfies $\sum _{i=1} ^n v_i w(S_i)\geq 
\tau$? \\

\noindent 
{Partition Problem:} Given a set of integers $A = \{a_1,\ldots,\, a_m\}$, such that $\sum _{i=1}^m a_i=2B$,  find a $2$-partition $\{P_1, P_2\}$ of $A$ which satisfies $\sum _{a_i\in P_1} a_i=\sum _{a_i\in P_2} a_i=B$.

\begin{theorem}
\textsc{FA-EF1} is {\rm NP}-hard.
\end{theorem}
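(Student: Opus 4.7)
The plan is to reduce from the (weakly $\mathrm{NP}$-hard) Partition problem as defined in the excerpt. Given positive integers $a_1, \ldots, a_m$ with $\sum_i a_i = 2B$, I will construct the following \textsc{FA-EF1} instance: take $n=2$ agents with valuation parameters $v_1 = 2$ and $v_2 = 1$, and $m+1$ indivisible goods consisting of $m$ ``original'' goods with weights $w(g_i) = a_i$ together with one ``pivot'' good $g^*$ of weight $w(g^*) = 2B$, and set the welfare threshold to $\tau = 7B$. Since the total public weight is $4B$, the social welfare of any allocation $(S_1,S_2)$ simplifies to $2w(S_1) + w(S_2) = w(S_1)+4B$, so reaching the threshold is equivalent to achieving $w(S_1) \geq 3B$.

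Next I would analyze the two cases based on which agent receives $g^*$. If $g^* \in S_1$, then $g^*$ is necessarily the heaviest good in $S_1$ (each $a_i \leq 2B$), so the EF1 constraint from agent~$2$'s side becomes $w(S_2) \geq w(S_1) - w(g^*) = w(S_1) - 2B$. Combined with $w(S_1)+w(S_2)=4B$, this is exactly $w(S_1) \leq 3B$, while the reverse EF1 direction is vacuous because $w(S_1)\geq 2B \geq w(S_2)$. Writing $w(S_1) = 2B + x$, where $x$ is the sum of the $a_i$'s placed in $S_1$, the EF1-feasible welfare in this case equals $6B + x$, maximized by the largest subset sum $x \leq B$. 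If $g^* \in S_2$, then $S_1 \subseteq \{g_1,\ldots,g_m\}$, so $w(S_1) \leq 2B$ and the welfare is at most $6B<7B$; this branch cannot reach the threshold.

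Combining the two branches: a YES Partition instance admits a subset summing to $x=B$, yielding an EF1 allocation of welfare exactly $7B$; conversely, in a NO instance integrality of the $a_i$'s forces $x \leq B-1$, so the optimal EF1 welfare is at most $7B-1$. The reduction is clearly polynomial in the input size, and since Partition is $\mathrm{NP}$-hard, this gives the claim. The main technical point to verify carefully is the EF1 analysis in the $g^* \in S_1$ case --- specifically, the equivalence between the two EF1 conditions and the single bound $w(S_1)\leq 3B$ --- together with ruling out the $g^* \in S_2$ branch as a way to achieve welfare $\geq 7B$; I expect that to be the only real obstacle, and one that is handled by the case analysis sketched above.
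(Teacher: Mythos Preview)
Your proof is correct and follows essentially the same approach as the paper: a reduction from Partition to a two-agent \textsc{FA-EF1} instance with one extra ``special'' good, where hitting the welfare threshold forces a balanced split of the original integers. The paper instantiates this with valuation parameters $(v,\,v-\epsilon)$ and an extra good of weight $B+\epsilon$, whereas your choice of parameters $(2,1)$ and a pivot of weight $2B$ is a cleaner variant that avoids the $\epsilon$-perturbation and the side condition $B+\epsilon > \max_i a_i$.
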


\begin{proof}
Given an instance of the  partition problem, $A= \{a_1, a_2, \ldots, a_m\}$, we construct an instance of \textsc{FA-EF1} with two agents and $(m+1)$ goods. With a positive integer $v$, we set the valuation parameters of the two agents  as  $v_1=v$ and $v_2=v-\epsilon$. The additive function $w$ is set to satisfy $w(i) = a_i$ for all $1\leq i \leq m$ and $W({m+1}) = B+ \epsilon $, with  $\epsilon >0$ and $B+\epsilon > \max _{i=1,\cdots, m} a_i$. Furthermore, let the threshold be $\tau := 3Bv +\epsilon v - \epsilon B$. 

Next, we show that a balanced partition $\{P_1, P_2\}$ exists if and only if, in the constructed instance there exists an $\EFone$ allocation with social welfare at least $\tau$. 

For the forward direction, say in the given partition instance a $2$-partition $\{P_1,P_2\}$ satisfies $\sum _{a_i\in P_1} a_i=\sum _{a_i\in P_2} a_i=B$. Note that the allocation $\mathcal{S} := (S_1,S_2)$, with the first agent's bundle $S_1 := P_1 \cup \{m+1\}$ and the second agent's bundle $S_2 := P_2$ is  $\EFone$. In addition the social welfare of this allocation is equal to $\tau$. 
\begin{align*}
 \sum_{i=1}^{2} v_i w(S_i) &= v(2B +\epsilon)+(v-\epsilon)B \\
&=3Bv +\epsilon v - \epsilon B \\ & = \tau
\end{align*}

We now prove the other direction using a contrapositive argument. In particular, we assume that the given partition instance does not admit a balanced partition.

Say, $P_1$ and $P_2$ are subsets such that $(P_1 \cup\{m+1\}, P_2)$ is an $\EFone$ allocation. Note that $\{P_1, P_2\}$ corresponds to a $2$-partition of $A$. Hence, by the assumption that the underlying partition instance does not admit a balanced partition, we get $w(P_1) \neq w(P_2)$. Write $w(P_1) = \Delta$ and $w(P_2) = 2B - \Delta$. The $\EFone$ condition ensures that $\Delta < B$; equality does not hold since  $w(P_1) \neq w(P_2)$. 

There are two possible allocations, either $P_1 \cup\{m+1\}$ is allocated to the first agent or to the second. We show that in both of these cases the social welfare is strictly less than $\tau$. 

If $P_1 \cup\{m+1\} $ is allocated to agent one, then the social welfare is 
\begin{align*}
&v(B+\epsilon + \Delta) + (v-\epsilon) (2B - \Delta )  \\
&=vB +v\epsilon + v\Delta +2Bv -2B\epsilon  -\Delta v +\Delta \epsilon \\
&=3Bv +\epsilon v -\epsilon B -\epsilon(B -\Delta) \\
&<\tau
\end{align*}

Otherwise, if agent two gets $P_1 \cup\{m+1\} $, the social welfare is 
\begin{align*}
&v(2B - \Delta) + (B+ \epsilon +\Delta) (v-\epsilon) \\
&=2vB -v\Delta + vB +\epsilon v +v\Delta -B\epsilon -\epsilon ^2 - \Delta \epsilon \\
&=3Bv +\epsilon v -\epsilon B -\epsilon(\epsilon +\Delta) \\
&<\tau
\end{align*}

Hence, in both the cases the $\EFone$ allocation fails to achieve the threshold welfare $\tau$. This completes the proof. 
\end{proof}

\subsection{Hardness of Maximizing Social Welfare Under $\EFone$ Constraints and Heterogeneous Valuations}
\label{ssec:SSEF1GenHardness}
This section considers $\EFone$ in a setting wherein the valuations of the $n$ agents over the $m$ goods are not necessarily identical. In particular, we show that under additive valuations, $v_i: 2^{[m]} \mapsto \mathbb{R}_+$, for agent $i \in [n]$, the following problem does not admit a nontrivial approximation.


\begin{align*}
\max_{(S_1,\ldots, S_n) \in \Pi_n([m])} &  \ \ \ \sum_{i=1}^{n} v_i (S_i) \\
\text{s.t. }   v_i(S_i) & \geq v_i (S_j) - v_i(g) \,\,\, \text{for all }  i,j \in [n]  \text{ and some } g \in S_j 
\end{align*}

We call this problem \textsc{HET-EF1} and establish an approximation preserving reduction to it from the maximum independent set problem. 

Given a graph $G=(V,E)$ we construct a \textsc{HET-EF1} instance such that the graph has an independent set of size at least $k$ if and only if the constructed \textsc{HET-EF1} instance achieves social welfare at least $k-1$. Set the number of goods $m=|V|$ and the number of agents $n=|E|+1$. For every edge $e_i=(u_j, u_k) 
\in E$, we have an agent $i$ such that $v_i(u_j)=v_i(u_k)=\varepsilon=O(1/n^2)$ and $v_i(u)=0$ for all $ u \notin \{u_j, u_k\}$. Agent $0$ values each good at one, i.e., $v_0(u)=1$ for all  $u \in [m]$.


Let $\mathcal{S}= (S_i)_{i=0}^n$ denote an optimal solution of  the constructed \textsc{HET-EF1} instance. The following lemma shows that the cardinality of the bundle allocated to agent $0$, i.e., $|S_0|$, dictates the social welfare of $\mathcal{S}$. 

\begin{claim}
\label{claim:one}
The social welfare of an optimal allocation $\mathcal{S} = (S_i)_{i=0}^n$, ${\rm SW}(\mathcal{S})$, satisfies $|S_0| \leq {\rm SW}(\mathcal{S}) < |S_0| +1$. 
\end{claim}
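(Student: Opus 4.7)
The plan is to decompose the social welfare along the two natural contributions and then bound each. Specifically, since $v_0(u) = 1$ for every good $u$ and $v_0$ is additive,
\[
{\rm SW}(\mathcal{S}) \;=\; v_0(S_0) + \sum_{i=1}^{n} v_i(S_i) \;=\; |S_0| + \sum_{i=1}^{n} v_i(S_i).
\]
The lower bound $|S_0| \le {\rm SW}(\mathcal{S})$ then follows immediately from the fact that every edge agent $i$ has non-negative additive valuation, so $\sum_{i=1}^n v_i(S_i) \ge 0$.

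For the strict upper bound, I would use the structure of the edge agents' valuations. Each edge agent $i$ (corresponding to edge $e_i = (u_j, u_k)$) values only the two endpoint goods at $\varepsilon$ each and values everything else at zero; thus $v_i(S_i) \le 2\varepsilon$ regardless of $S_i$. Summing over the $n = |E| + 1$ agents gives
\[
\sum_{i=1}^{n} v_i(S_i) \;\le\; 2n\varepsilon.
\]
By the choice $\varepsilon = O(1/n^2)$ (made exactly to make this error term negligible), we have $2n\varepsilon = O(1/n) < 1$ for all sufficiently large $n$, which one can assume without loss of generality in a hardness reduction. Combining the two bounds yields $|S_0| \le {\rm SW}(\mathcal{S}) < |S_0| + 1$.

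I do not expect any real obstacle: the claim is essentially a quantitative statement that the $\varepsilon$ scaling was chosen small enough so that the contribution of the edge agents cannot change the integer part of the welfare, and this is witnessed directly by the additive decomposition above. The only subtlety is to make explicit that $\varepsilon$ is chosen small enough that $2n\varepsilon < 1$, which is precisely the content of the $O(1/n^2)$ specification in the reduction.
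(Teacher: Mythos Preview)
Your proposal is correct and follows essentially the same route as the paper: decompose ${\rm SW}(\mathcal{S})$ as $|S_0| + \sum_{i\ge 1} v_i(S_i)$, obtain the lower bound from non-negativity, and obtain the strict upper bound from $v_i(S_i)\le 2\varepsilon$ together with $\varepsilon = O(1/n^2)$. The paper's version differs only cosmetically (it writes the sum bound in terms of $|E|$ rather than $n$).
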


\begin{proof}
By definition, ${\rm SW}(\mathcal{S}) = \sum \limits_{i=0}^{n} v_i (S_i)=|S_0|+\sum \limits_{i=1}^{n} v_i (S_i)\geq |S_0|$.

Furthermore, since $v_i(S_i) \leq 2\epsilon$, for all $ 1 \leq i \leq n$, and $\epsilon = O(1/n^2)$, we have ${\rm SW}(\mathcal{S}) = |S_0|+\sum \limits_{i=1}^{n} v_i (S_i)\leq |S_0| +\epsilon|E| \leq |S_0| +1/n <|S_0|+1$.
\end{proof}

\begin{claim}
\label{claim:two}
The bundle allocated to agent $0$ (i.e., $S_0$) corresponds to an independent set in the given graph $G$. 
\end{claim}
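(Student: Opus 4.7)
The plan is to establish the claim by direct contradiction from the $\EFone$ feasibility constraint on the optimal allocation $\mathcal{S}=(S_i)_{i=0}^n$. Suppose, toward a contradiction, that $S_0$ is not independent in $G$, so there exist two vertices $u_j,u_k \in S_0$ with $e_i=(u_j,u_k)\in E$ for some edge-index $i\in\{1,\dots,n\}$. By construction, agent $i$ is the unique agent (other than agent $0$) who values these two specific goods, with $v_i(u_j)=v_i(u_k)=\varepsilon$ and $v_i(u)=0$ for every other good $u$.

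Since $(S_0,S_1,\dots,S_n)$ is a partition, the inclusion $\{u_j,u_k\}\subseteq S_0$ forces $\{u_j,u_k\}\cap S_i = \emptyset$, and so $v_i(S_i)=0$. On the other hand, $v_i(S_0)\geq v_i(u_j)+v_i(u_k)=2\varepsilon$. The $\EFone$ condition applied to the pair $(i,0)$ demands the existence of some $g\in S_0$ for which $v_i(S_i) \geq v_i(S_0)-v_i(g)$. But the maximum of $v_i(g)$ over $g\in S_0$ is exactly $\varepsilon$ (attained at either $u_j$ or $u_k$), so the right-hand side is at least $2\varepsilon - \varepsilon = \varepsilon > 0 = v_i(S_i)$, a contradiction.

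Hence no edge of $G$ can have both endpoints inside $S_0$, which is precisely the statement that $S_0$ is an independent set. I do not anticipate a genuine obstacle here: the numerical value $\varepsilon = O(1/n^2)$ is irrelevant to this step (it is used elsewhere, e.g.\ in Claim~\ref{claim:one}, to bound the contribution of the edge-agents to the social welfare), and the argument uses only feasibility of $\mathcal{S}$, not its optimality. In the surrounding reduction, this claim will then be combined with Claim~\ref{claim:one} to conclude that a welfare of at least $k-1$ in the \textsc{HET-EF1} instance yields an independent set of size at least $k-1$ in $G$ (and conversely, given an independent set one can allocate it to agent $0$ and distribute the remaining goods to the edge-agents so as to satisfy $\EFone$), delivering the approximation-preserving reduction.
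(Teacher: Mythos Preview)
Your proof is correct and follows essentially the same contradiction argument as the paper: assume both endpoints of an edge $e_i$ lie in $S_0$, observe $v_i(S_i)=0$ while $v_i(S_0)=2\varepsilon$, and conclude that removing any single good from $S_0$ still leaves agent $i$ envious, violating $\EFone$. Your write-up is in fact slightly more explicit than the paper's (spelling out why $v_i(S_i)=0$ and computing the $\max_g v_i(g)$), but the idea is identical.
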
 \label{thm:het}

\begin{proof}
For contradiction, say  $u_j,u_k\in S_0$ for an edge $e_i = (u_j,u_k)$ of the graph. 

In such a case,  $v_i(S_i)=0$ and $v_i(S_0)=2\epsilon$. In addition, even after a good ($u_j$ or $u_k$) is removed from $S_0$, agent $i$ will continue to envy agent $0$, which contradicts the fact that $(S_i)_{i=0}^n$ is an $\EFone$ allocation. Therefore, $S_0$ is an independent set. 
\end{proof}

\begin{lemma}
In the given graph $G$, the maximum independent set is of size $t$ if and only if the optimal social welfare in the constructed \textsc{HET-EF1} instance is between $t$ and $t+1$.
\end{lemma}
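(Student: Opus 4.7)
The lemma is essentially a corollary of Claims~\ref{claim:one} and~\ref{claim:two}, together with one missing ingredient: a construction that turns a maximum independent set of size $t$ into a feasible (i.e.\ $\EFone$) allocation of social welfare at least~$t$. I will split the argument into the two directions.

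For the $(\Leftarrow)$ direction, assume the optimal social welfare ${\rm SW}(\mathcal{S})$ lies in $[t, t+1)$. Claim~\ref{claim:one} gives $|S_0| \leq {\rm SW}(\mathcal{S}) < |S_0|+1$; since $|S_0|$ is an integer this forces $|S_0| = t$. Claim~\ref{claim:two} then tells us that $S_0$ is an independent set in $G$, so $\alpha(G) \geq t$. The upper bound $\alpha(G) \leq t$ will be an immediate consequence of the forward direction: if $G$ admitted an independent set of size $t+1$, the construction below would produce a feasible allocation of welfare at least $t+1$, contradicting ${\rm SW}(\mathcal{S}) < t+1$.

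For the $(\Rightarrow)$ direction, assume $\alpha(G) = t$ and take any independent set $I \subseteq V$ with $|I| = t$. The upper bound on the optimum is again free: by Claims~\ref{claim:one} and~\ref{claim:two}, the optimal $S_0$ is independent, so $|S_0| \leq \alpha(G) = t$ and ${\rm SW}(\mathcal{S}) < |S_0|+1 \leq t+1$. The real task is the lower bound ${\rm SW}(\mathcal{S}) \geq t$. I will exhibit an $\EFone$ allocation of welfare at least $t$. Set $S_0 := I$, and distribute the remaining $|V|-t$ goods as singletons, each to a distinct agent in $\{1,\dots,|E|\}$ (leaving the remaining agents with empty bundles). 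For this to be well-defined I need $|V|-t \leq |E|$, which holds because $|V|-\alpha(G)$ equals the minimum vertex cover number $\tau(G)$, and $\tau(G) \leq |E|$ (one endpoint per edge already gives a cover).

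It remains to verify the $\EFone$ constraint for this allocation. Agent $0$ holds $t$ goods while every other bundle has size at most $1$, so removing any single good from another bundle leaves it with value $\leq 1 \leq t = v_0(S_0)$. For any agent $i \geq 1$ with edge $e_i = (u_j,u_k)$, observe that no bundle in the allocation contains both $u_j$ and $u_k$: the bundle $S_0 = I$ is independent, and every other bundle is a singleton. Hence for every $\ell$ we have $v_i(S_\ell) \leq \varepsilon$, and removing the unique good of $S_\ell$ that is valued by $i$ (if any) brings this down to $0 \leq v_i(S_i)$. Thus the allocation is $\EFone$, its welfare is $t + \sum_{i \geq 1} v_i(S_i) \geq t$, and combining with the upper bound establishes ${\rm SW}(\mathcal{S}) \in [t, t+1)$. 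The only non-routine step is the construction together with the observation $\tau(G) \leq |E|$ that guarantees we have enough agents to spread the remaining goods as singletons; this is what makes the $\EFone$ check immediate.
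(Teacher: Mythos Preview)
Your proof is correct and follows the same two-direction structure as the paper's: both rely on Claims~\ref{claim:one} and~\ref{claim:two} for the reverse implication and on allocating the independent set to agent~$0$ for the forward one. The paper's version simply asserts that ``such an allocation is $\EFone$'' without saying what happens to the goods outside the independent set; your explicit construction---spreading the remaining $|V|-t$ goods as singletons over distinct edge-agents, justified by the bound $|V|-\alpha(G)=\tau(G)\le |E|$---is precisely the detail needed to make that assertion rigorous, so your write-up is in fact more complete than the paper's on this point.
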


\begin{proof}
If $G$ has an independent set of size $t$, then allocating the corresponding set of goods to agent $0$, we get that the social welfare is at least $t$. Note that such an allocation is $\EFone$.

Furthermore, if the optimal social welfare is equal to $t+\delta$, where $0\leq \delta <\epsilon|E|<1$, then Claim~\ref{claim:one} implies that agent $0$ is allocated $t$ goods. Therefore, using Claim~\ref{claim:two}, we have that $G$ has an independent set of size at least $t$. 
\end{proof}

The previous lemma shows that there exists an approximation preserving reduction from the maximum independent set problem to \textsc{HET-EF1}. Since maximum independent set cannot be efficiently approximated within $|V|^{\delta}$ (assuming ${\rm NP} \not\subseteq {\rm ZPP}$), for a constant $\delta >0$~\cite{HAS96}, we get the following theorem.   

\begin{theorem}
For a fixed constant $\delta>0$, there does not exist a polynomial-time $m^{\delta}$-approximation algorithm for \textsc{HET-EF1}, unless ${\rm NP} \subseteq {\rm ZPP}$.
\end{theorem}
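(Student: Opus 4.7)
My plan is a proof by contradiction that leverages the approximation-preserving nature of the reduction already established, together with H\aa stad's $|V|^{1-\varepsilon}$-inapproximability of Maximum Independent Set (unless ${\rm NP} \subseteq {\rm ZPP}$). Suppose, toward a contradiction, that for some fixed constant $\delta > 0$ there is a polynomial-time $m^{\delta}$-approximation algorithm $\mathcal{B}$ for \textsc{HET-EF1}. Given a Maximum Independent Set instance $G=(V,E)$, I would construct the \textsc{HET-EF1} instance exactly as in the reduction of this section: $m = |V|$ goods, $n = |E|+1$ agents, with agent $0$ valuing each good at $1$ and each edge-agent valuing only the two endpoints of its edge at $\varepsilon = O(1/n^2)$. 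By the lemma preceding this theorem, the optimum of this instance lies in $[t, t+1)$ where $t$ is the maximum independent set size of $G$.

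Next, I would run $\mathcal{B}$ to obtain an $\EFone$ allocation $(S_i)_{i=0}^n$ whose social welfare satisfies ${\rm SW} \geq t/m^{\delta}$. Invoking Claim~\ref{claim:one} gives $|S_0| \geq {\rm SW} - 1 \geq t/m^{\delta} - 1$, and Claim~\ref{claim:two} guarantees that $S_0$ is an independent set in $G$. Thus I extract, in polynomial time, an independent set in $G$ of size at least $t/m^{\delta} - 1$.

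The additive $-1$ term prevents this from being a clean $|V|^{\delta}$-approximation when $t$ is small, so I would amplify via a graph product. Let $G^{\boxtimes k}$ denote the $k$-fold strong product of $G$, for a constant $k$ to be chosen. Then $|V(G^{\boxtimes k})| = |V|^{k}$ and the maximum independent set of $G^{\boxtimes k}$ equals $t^{k}$; moreover, from an independent set $T \subseteq V(G^{\boxtimes k})$ of size $s$ one can extract, in polynomial time, an independent set in $G$ of size $\geq s^{1/k}$ (via the projection to a single coordinate that achieves the maximum). Applying the procedure above to $G^{\boxtimes k}$ yields an independent set in $G^{\boxtimes k}$ of size $\geq t^{k}/|V|^{k\delta} - 1$, hence an independent set in $G$ of size $\geq \bigl(t^{k}/|V|^{k\delta} - 1\bigr)^{1/k}$. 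For any constant $\delta'' \in (\delta, 1)$, choosing $k$ large enough ensures that the additive $-1$ is absorbed whenever $t \geq |V|^{\delta''-\delta}$, while the case $t < |V|^{\delta''-\delta}$ is already covered by returning a single vertex (a trivial $|V|^{\delta''-\delta}$-approximation). Either way, we obtain a polynomial-time $|V|^{\delta''}$-approximation for Maximum Independent Set with $\delta'' < 1$, contradicting H\aa stad's inapproximability bound under ${\rm NP} \not\subseteq {\rm ZPP}$.

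The routine parts of this plan are the two applications of Claims~\ref{claim:one} and~\ref{claim:two} and the bookkeeping on the product graph. The main obstacle, and the step I would devote the most care to, is the amplification argument that strips the additive $-1$ loss: I must choose the product exponent $k$ carefully so that the multiplicative approximation guarantee survives on instances where $t$ is small, and so that the polynomial-time construction of $G^{\boxtimes k}$ and the projection back to $G$ remain tractable as a function of $1/\delta$. Once that trade-off is arranged, the contradiction with H\aa stad's theorem is immediate.
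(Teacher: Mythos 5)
Your core argument is the same as the paper's: the paper simply observes that the reduction together with Claims~\ref{claim:one} and~\ref{claim:two} is approximation preserving and invokes H\aa stad's inapproximability of Maximum Independent Set, exactly as in your first two paragraphs. The extra machinery you add to absorb the additive $-1$ is where the trouble lies. Your amplification step is mathematically wrong as stated: for the \emph{strong} product the independence number is only super-multiplicative, $\alpha(G\boxtimes H)\geq \alpha(G)\alpha(H)$, and strict inequality is possible (for $C_5$ one has $\alpha(C_5\boxtimes C_5)=5>4$; this gap is precisely what the Shannon capacity measures). So $\alpha(G^{\boxtimes k})=t^{k}$ is false in general, and, worse, an independent set of $G^{\boxtimes k}$ need not project to an independent set of $G$ in any coordinate (the $5$-element independent set of $C_5\boxtimes C_5$ projects onto all of $C_5$ in each coordinate), so the claimed extraction of an independent set of size $s^{1/k}$ fails. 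The product for which independence numbers multiply exactly and projections of independent sets remain independent is the disjunctive (co-normal) product, where two tuples are adjacent as soon as \emph{some} coordinate pair is an edge.

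The good news is that no amplification is needed at all. From an $m^{\delta}$-approximate EF1 allocation you get, via Claims~\ref{claim:one} and~\ref{claim:two} (whose proofs use only the EF1 property and the structure of the valuations, not optimality, so they do apply to the approximate solution), an independent set $S_0$ with $|S_0|>t/m^{\delta}-1$. If $t\geq 2m^{\delta}$ this is at least $t/(2m^{\delta})$, i.e., a $2|V|^{\delta}$-approximation; if $t<2m^{\delta}$, outputting any single vertex is already a $2|V|^{\delta}$-approximation. Since $2|V|^{\delta}\leq|V|^{\delta'}$ for any fixed $\delta'\in(\delta,1)$ once $|V|$ is large enough (small graphs being solvable by brute force), this contradicts the $|V|^{\delta'}$-hardness of Maximum Independent Set under ${\rm NP}\not\subseteq{\rm ZPP}$. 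With the detour through graph products deleted and replaced by this two-line case split, your argument coincides with the paper's and is complete.
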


Note that an $m$-approximation for \textsc{HET-EF1} can be obtained by first finding a maximum weight matching between the agents and the goods, and then---with this partial allocation in hand---executing the algorithm of Lipton et al.~\cite{LMM+04approximately}.

\subsection{Approximation Guarantee for $\beta$-Majorizing Sequences}
\label{app:major}

The following lemma is well known under the majorization order (i.e., for $\beta = 1$); see, e.g.,\citep{MAR11}. 
Here, for completeness, we provide a proof for general $\beta \in (0,1]$.  

\begin{lemma} 
\label{lemma:betamaj}
If sequence $(x_i)_{i}$ $\beta$-majorizes sequence $ (y_i)_{i}$, then for any set of valuation parameters $v_1 \geq v_2 \geq \ldots \geq v_n \geq 0$ the following inequality holds $\sum_i v_i x_{(i)} \geq \beta \sum_i v_i y_{(i)}$. Here $x_{(i)}$ and $y_{(i)}$ denote the $i$th largest element in the two sequences, respectively, and parameter $\beta \in (0,1]$.
\end{lemma}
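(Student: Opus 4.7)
The plan is to prove Lemma~\ref{lemma:betamaj} via Abel summation (summation by parts), mirroring the classical argument for the $\beta = 1$ majorization inequality but tracking the slack introduced by $\beta \le 1$. Throughout I will assume (as is the case for all the applications of this lemma in the paper, where the sequences are subset-weights under $w(\cdot)$) that the $y_{(i)}$ are non-negative; this is the only place non-negativity will enter.

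First, I would rewrite both $\sum_{i=1}^n v_i x_{(i)}$ and $\sum_{i=1}^n v_i y_{(i)}$ using Abel summation in terms of their prefix sums $X_k := \sum_{i=1}^k x_{(i)}$ and $Y_k := \sum_{i=1}^k y_{(i)}$. Concretely,
\[
\sum_{i=1}^n v_i x_{(i)} \;=\; \sum_{k=1}^{n-1} (v_k - v_{k+1}) X_k \;+\; v_n X_n,
\]
and the analogous identity for $(y_i)$. Since $v_1 \ge v_2 \ge \cdots \ge v_n \ge 0$, each coefficient $(v_k - v_{k+1})$ is non-negative, so the prefix-sum inequalities $X_k \ge \beta Y_k$ (valid for $k \le n-1$ by $\beta$-majorization) can be multiplied through and added, yielding
\[
\sum_{k=1}^{n-1} (v_k - v_{k+1}) X_k \;\ge\; \beta \sum_{k=1}^{n-1} (v_k - v_{k+1}) Y_k.
\]
The boundary term is handled by the equality of total sums: $X_n = Y_n$, so $v_n X_n = v_n Y_n$.

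Combining these two bounds gives
\[
\sum_{i=1}^n v_i x_{(i)} \;\ge\; \beta \sum_{k=1}^{n-1}(v_k - v_{k+1}) Y_k \;+\; v_n Y_n \;=\; \beta \sum_{i=1}^n v_i y_{(i)} \;+\; (1 - \beta)\, v_n Y_n,
\]
where the last equality reapplies Abel summation to the $y$-side and isolates the $\beta$-vs-$1$ discrepancy in the boundary term. Since $\beta \in (0,1]$, $v_n \ge 0$, and $Y_n = \sum_i y_{(i)} \ge 0$ (non-negativity of the $y_i$'s, as is the case in the paper's applications to subset weights), the residual $(1-\beta) v_n Y_n$ is non-negative, and we conclude $\sum_i v_i x_{(i)} \ge \beta \sum_i v_i y_{(i)}$.

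There is no real obstacle here; the only subtlety is noticing that the $\beta$-weakening of the prefix inequalities leaves behind the slack term $(1-\beta) v_n Y_n$, which must be absorbed using non-negativity of $Y_n$ (guaranteed in our setting because $w$ is a non-negative weight function). If one wanted a version of the lemma without the non-negativity assumption, the statement would need to be adjusted, but for the paper's purposes the above suffices.
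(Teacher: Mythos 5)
Your proof is correct and follows essentially the same route as the paper's: both arguments are Abel summation over the prefix sums, multiplying the prefix inequalities $X_k \ge \beta Y_k$ by the non-negative gaps $v_k - v_{k+1}$ and summing. If anything, you are slightly more careful than the paper at the boundary term $k=n$, where the prefix inequality is replaced by the equality $X_n = Y_n$ and the leftover slack $(1-\beta)v_n Y_n$ must be absorbed using $v_n \ge 0$ and non-negativity of the $y_{(i)}$ --- a point the paper's displayed chain passes over implicitly.
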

\begin{proof}
Reindex the sequences such that $x_1 \geq x_2 \geq \ldots \geq x_n$ and $y_1 \geq y_2 \geq \ldots \geq y_n$ and, hence, the following inequality is satisfied for all $k \in [n-1]$
\begin{align}
\sum \limits _{i=1}^k x_i\geq \beta \sum \limits _{i=1}^k y_i \label{eq:one}
\end{align}

Note that, for any $k \in [n]$ 
\begin{align*}
\sum \limits _{i=1}^k v_i x_i  & = \sum \limits _{\ell=1}^{k-1} \sum \limits _{i=1}^\ell (v_\ell - v_{\ell+1}) x_i \ + \  v_k \sum \limits _{i=1}^k x_i \\ 
& =\sum \limits _{\ell=1}^{k-1} ( v_\ell - v_{\ell +1}) \sum \limits _{i=1}^\ell x_i  \ + \ v_k \sum \limits _{i=1}^k x_i \\
& \geq \sum \limits _{\ell=1}^{k-1} ( v_\ell - v_{\ell +1}) \left( \beta \sum \limits _{i=1}^\ell y_i \right)  \ + \ v_k \left( \beta \sum \limits _{i=1}^k y_i \right) 
\quad \text{(using (\ref{eq:one}) and the ordering $v_\ell \geq v_{\ell+1}$)} \\
& =  \beta \sum \limits _{\ell=1}^{k-1} (v_\ell-v_{\ell+1})\sum \limits _{i=1}^\ell y_i + \beta v_k \sum \limits _{i=1}^k y_i \\
& = \beta \sum \limits _{i=1}^k v_i y_i
\end{align*}

Hence, the claim follows. 


\end{proof}

\section{MMS Supplements}
\label{sec:AMMSSupp}
This section establishes the hardness of maximizing social welfare subject to approximate maximin constraints. 
Given a fair division instance $\mathcal{I} = \langle [m], [n], w, (v_i)_i \rangle$, with additive function $w(\cdot)$, computing the maximin share, $\mu := \max_{(P_1, \ldots, P_n)}$ $ \min_{j} w(P_j)$, is {\rm NP}-hard. Thus, finding a feasible solution of \textsc{FA-MMS} is computationally hard as well. However, using the PTAS developed in~\cite{WOE97}, for any fixed $\varepsilon >0$ and additive function $w$, one can find a $(1 - \varepsilon)$-approximate $\MMS$ allocation in polynomial time. 

In light of this PTAS (and other approximation results~\cite{AMN+15approximation,PRO14,BAR17a,GHO17}), one must address the complexity of a relaxation of \textsc{FA-MMS} where the allocations are required to be approximately maximin fair. We show even this version of the problem is computationally hard, i.e., for a fixed $\alpha \in (0,1)$, we prove the following is {\rm NP}-hard.
\begin{align*}
\max_{(S_1, \ldots, S_n) \in \Pi_n([m])} \quad & \ \ \sum_{i=1}^n v_i w(S_i) \tag{\textsc{FA-AMMS}} \\
\text{subject to} \quad &   w(S_i)  \geq \alpha \mu \qquad \text{ for all } i \in [n].
\end{align*}
\begin{theorem}
For any $\alpha \leq 1/4$, \textsc{FA-AMMS} is {\rm NP}-hard.
\end{theorem}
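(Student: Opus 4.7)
The plan is to reduce \textsc{Partition} to \textsc{FA-AMMS}. Given an instance $\{a_1,\ldots,a_m\}$ of positive integers summing to $2B$, I would construct an \textsc{FA-AMMS} instance with $n=2$ agents (valuations $v_1 = 1$ and $v_2 = \varepsilon$ for a small fixed $\varepsilon \in (0,1)$) and $m+2$ goods: goods $g_1,\ldots,g_m$ with weights $w(g_i) = a_i$, together with two \emph{heavy} goods $g_{m+1}, g_{m+2}$ each of weight $K := B(1-\alpha)/\alpha$. Since $\alpha \leq 1/4$ we have $K \geq 3B$, and this choice of $K$ is calibrated so that the approximate-MMS threshold $\alpha\mu$ lines up with the Partition threshold $B$.

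The first step is to compute $\mu$. A case analysis over $2$-partitions shows that, because $K$ is large relative to $B$, the maximin $2$-partition places one heavy good in each bundle: splitting the heavies yields a minimum bundle weight of $K+B-d$, while placing both in the same bundle yields at most $2B$, and $K + B - d \geq 2B$ whenever $K \geq B+d$, which holds since $K \geq 3B$ and $d \leq B$. Hence $\mu = K + B$ when Partition is a yes-instance, and $\mu = K + B - d$ otherwise, where $d \geq 1$ is the smallest gap $|w(S) - B|$ over subsets $S \subseteq [m]$. The calibration of $K$ then gives $\alpha\mu = B$ in the yes-case and $\alpha\mu = B - \alpha d < B$ in the no-case.

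Since $w(S_1) + \varepsilon\, w(S_2) = (2B + 2K) - (1-\varepsilon)\, w(S_2)$, maximizing welfare is equivalent to minimizing $w(S_2)$ subject to $w(S_2) \geq \alpha\mu$. Every heavy good weighs $K > B \geq \alpha\mu$, so the optimum uses $S_2 \subseteq [m]$. In the yes-case an exact half with $w(S_2) = B$ achieves the bound. In the no-case the key step is to show that no subset of $[m]$ has weight in $[B - \alpha d,\ B)$: letting $d_- \geq d$ be the smallest sub-$B$ gap, one has $\alpha d < d \leq d_-$ since $\alpha < 1$ and $d \geq 1$, so the largest sub-$B$ sum $B - d_-$ is strictly less than $B - \alpha d$. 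Hence the minimum feasible $w(S_2)$ jumps to $B + d_+ \geq B+1$, where $d_+$ is the smallest super-$B$ gap. A direct calculation then yields $\mathrm{OPT}(\text{yes}) - \mathrm{OPT}(\text{no}) = (1-\varepsilon)\, d_+ \geq 1 - \varepsilon > 0$, so any polynomial-time algorithm for \textsc{FA-AMMS} decides \textsc{Partition}.

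The main obstacle is this last \emph{gap} step: one must rule out that the slight relaxation of the constraint from $B$ down to $B - \alpha d$ in the no-case allows a sub-$B$ subset sum to become feasible, which would collapse the yes/no gap in OPT. The inequality $\alpha d < d_-$ (which uses only $\alpha < 1$ and $d \geq 1$) is precisely what forces the minimum feasible $w(S_2)$ to jump to $B + d_+$ rather than settle at some value strictly less than $B$. The remaining verifications---polynomial size of the construction and feasibility of $S_1 = [m+2] \setminus S_2$ under the chosen $S_2$---are straightforward given $K \geq 3B$.
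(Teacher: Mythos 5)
Your reduction is correct, and it follows the same overall strategy as the paper's proof---a reduction from \textsc{Partition} in which a few heavy ``padding'' goods pin down the maximin share $\mu$ so that the threshold $\alpha\mu$ encodes the balanced-partition target $B$, while concentrating essentially all of the welfare on a single high-valuation agent so that maximizing welfare amounts to shrinking the other agents' bundles to the feasibility boundary. The executions differ in instructive ways, though. The paper uses three agents with $v_2=v_3=0$ and fixed weights $3T,3T,4T$, and its no-case argument is the inequality $\tfrac14\mu \ge \tfrac14(\beta+3T) > \beta$, which forces a heavy good onto a zero-valuation agent and caps $w(A_1)$ at $9T$; note that this inequality is calibrated precisely to $\alpha=1/4$, so the paper's argument as written establishes hardness only at that value. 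Your construction instead uses two agents and chooses $K=B(1-\alpha)/\alpha$ so that $\alpha\mu=B$ exactly in the yes-case, which makes the proof go through verbatim for every rational $\alpha\le 1/4$ (indeed your only real requirement is $K\ge B+d$, so even $\alpha\le 1/3$ would do); the price is that your no-case hinges on the integrality argument that no subset sum of $[m]$ lies in $[B-\alpha d,\,B)$, i.e.\ $d_- \ge d > \alpha d$, which you correctly identify and verify as the crux. Two minor points to tidy up: (i) you should note that WLOG $2B$ is even (otherwise \textsc{Partition} is trivially a no-instance), so that $d\ge 1$ as you assert; and (ii) for the construction to have polynomial size, $\alpha$ should be a fixed rational, so that $K$ has polynomial bit-length---the same caveat implicitly present in the paper. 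Neither affects correctness.
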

\begin{proof}
We provide a reduction from the partition problem. Given a set of $m$ integers $W :=\{w_1, w_2, \ldots, w_m\}$, the objective in the partition problem is to determine if there exists a $2$-partition of $[m]$, say $(P_1, P_2)$  such that $\sum_{j \in P_1} w_j = \sum_{j \in P_2} w_j$, i.e., the goal is to find a balanced partition of the given set of integers. 

We will construct an instance of \textsc{FA-AMMS} with $(m+3)$ goods and three agents such that the social welfare of the constructed instance is at least $10 T$ iff the underling partition instance admits a balanced partition; here $T := \frac{1}{2} \sum_{j=1}^m w_j$. In addition, we set the approximation factor $\alpha = 1/4$. This will establish the stated hardness result. 
In particular, we consider a fair-division instance with $v_1 = 1$ and $v_2 = v_3 = 0$ for the three agents. Here, for the first $m$ goods the weight is defined by equating to the given integers:  $ w(j) := w_j$ for all $1 \leq j \leq m$. The weight of the $(m+1)$th and the $(m+2)$th good is set to be $3T$ (recall $T = \frac{1}{2} \sum_{j=1}^m w_j$). Finally, for the last good we have  $w(m+3) := 4T$. 

Note that the maximin share of the constructed instance, $\mu$, is at most $4T$. The total weight $w([m+3])$ is equal to $12T$. Therefore, the average between the three agents is $4T$. Since $\mu$ cannot exceed the average, the stated bound follows. 
Next, we show that $\mu$ is equal to this average iff there is balanced partition of the given set of integers. 

\begin{claim}
The maximin share, $\mu$, of the constructed instance is equal to $4T$ if and only if the underlying partition instance admits a balanced partition.
\end{claim}
\begin{proof}
The reverse direction of this claim is direct. If $(P_1, P_2)$ is a balanced partition then assigning good $(m+3)$ to agent 1 along with the bundles $P_1 \cup \{m+1\}$ and $P_2 \cup \{ m+2\}$ to agents 2 and 3, respectively, ensures that $\mu = 4T$. 

Consider the case in which the partition instance does not admit a balanced partition. Here, we will show that, for any allocation $(A_1, A_2, A_3) \in \Pi_3([m+3])$, the following bound holds $\min_i w(A_i) < 4T$. We can assume that the $(m+3)$th good is allocated as a singleton; otherwise redistributing the goods allocated with this good does not reduce $\mu$ below $4T$. Say, $A_1 = \{m+3\}$. In case the $(m+1)$th and the $(m+2)$th good are allocated together, say the are in bundle $A_2$, then the stated bound holds $w(A_3) \leq w([m]) = 2T < 4 T$. Otherwise, if the $(m+1)$th and the $(m+2)$th good are allocated to, say, agents $2$ and $3$, respectively, we have $\min \{ w(A_2 \setminus \{m+1\}), w(A_3 \setminus \{m+2 \}) \} < T$. The last inequality follows from the fact that the partition instance does not admit a balanced partition. Therefore, we have $\min \{w(A_2), w(A_3) \}  < 4T $. 
\end{proof}

First we will show that if there exists a balanced partition of the given set of integers then the social welfare of the constructed \textsc{FA-AMMS} instance is at least $10T$. For a balanced partition $\{P_1, P_2\}$, we have $w(P_1) = w(P_2) = T$. Therefore, the allocation with bundles $A_1 =\{ m+1, m+2, m+3\}$, $A_2 = P_1$ and $A_3 = P_2$ is $1/4$-approximate $\MMS$; recall $\alpha = 1/4$ and (in this case) $\mu = 4T$. Furthermore, the social welfare of this allocation is equal to $10T$. We complete the proof by establishing that if the given partition instance does not admit a balanced partition then the social welfare is strictly less than $10T$.  Write $\beta := $ $ \max \limits_{(Q_1, Q_2) \in \Pi_2([m])} \min\limits_{i}$ $ w(Q_i)$. The nonexistence of a balanced partition implies that $\beta < T$. 

Write $(P_1, P_2) \in \argmax_{(Q_1, Q_2) \in \Pi_2([m])} \min_{i} w(Q_i)$. Considering the allocation $(\{m+3\}$, $P_1 \cup \{m+1\}$, $P_2 \cup\{ m+2\})$, we get that the maximin share of the constructed instance is at least $\beta + 3T$. Therefore, the approximate maximin share value imposed in \textsc{FA-AMMS} is strictly greater than $\beta $ i.e.
\begin{align*}
\frac{1}{4} \mu \geq \frac{1}{4} \left( \beta + 3 T \right) > \beta. \end{align*}

Note that a feasible allocation for \textsc{FA-AMMS}, say $\mathcal{A} = (A_1, A_2, A_3)$, must satisfy $\min_i w(A_i) \geq \frac{1}{4} \mu > \beta$. This implies that we cannot partition just the first $m$ goods between agents 2 and 3 to obtain a feasible allocation: in particular, if $A_2 \cup A_3 \subseteq [m]$, then $\min \{w(A_2),   w(A_3) \} \leq \beta$ which contradicts the feasibility of $ \mathcal{A}$. Therefore, one of the last three goods, $\{m+1, m+2, m+3\}$ has to be assigned to either agent 2 or agent 3. Say, the $(m+2)$th good is not assigned to agent 1, then we have $w(A_1) \leq w(m+3) + w(m+1) + w([m]) \leq 4T + 3T + 2T = 9T$. Analogously, in the other cases one can argue that $w(A_1) \leq 9T$. Overall, we get that the social welfare is strictly less than $10T$, whenever the given instance does not admit a balanced partition. This completes the proof of the theorem. 
\end{proof}



\end{document}